\newcommand{\argmin}{\mathop{\rm arg~min}\limits}
\title[Standby-Base Deadlock Avoidance Method]{Standby-Based Deadlock Avoidance Method for Multi-Agent Pickup and Delivery Tasks}
\author{Tomoki~Yamauchi}
\affiliation{
  \institution{Waseda University}
  \city{Tokyo}
  \country{Japan}}
\email{t.yamauchi@isl.cs.waseda.ac.jp}
\author{Yuki~Miyashita}
\affiliation{
  \institution{Waseda University}
  \city{Tokyo}
  \country{Japan}}
\email{y.miyashita@isl.cs.waseda.ac.jp}
\author{Toshiharu~Sugawara}
\affiliation{
  \institution{Waseda University}
  \city{Tokyo}
  \country{Japan}}
\email{sugawara@isl.cs.waseda.ac.jp}
\begin{abstract}
The {\em multi-agent pickup and delivery} (MAPD) problem, in which multiple agents iteratively carry materials without collisions, has received significant attention. However, many conventional MAPD algorithms assume a specifically designed grid-like environment, such as an automated warehouse. Therefore, they have many pickup and delivery locations where agents can stay for a lengthy period, as well as plentiful detours to avoid collisions owing to the freedom of movement in a grid. By contrast, because a maze-like environment such as a search-and-rescue or construction site has fewer pickup/delivery locations and their numbers may be unbalanced, many agents concentrate on such locations resulting in inefficient operations, often becoming stuck or deadlocked. Thus, to improve the transportation efficiency even in a maze-like restricted environment, we propose a deadlock avoidance method, called {\em standby-based deadlock avoidance} (SBDA). SBDA uses {\em standby nodes} determined in real-time using the {\em articulation-point-finding algorithm}, and the agent is guaranteed to stay there for a finite amount of time. We demonstrated that our proposed method outperforms a conventional approach. We also analyzed how the parameters used for selecting standby nodes affect the performance.
\end{abstract}
\keywords{Multi-agent pickup and delivery tasks; Multi-agent path
  finding; Deadlock avoidance; Decentralized robot path planning}
\def\tsk{{\it tsk}}
\def\TskEPSet{V_\tsk}
\def\dist{\textrm{dist}}
\def\material{\phi}
\def\wait{{\it wait}}
\def\move{{\it move}}
\def\load{{\it load}}
\def\unload{{\it unload}}
\def\rotate{{\it rotate}}
\def\TaskSet{\mathcal{T}}
\def\park{{\it park}}
\def\AgentSet{{A}}
\def\StandByNodeSet{\mathcal{S}}
\def\PStandByNodeSet{\StandByNodeSet_{\it psn}}
\def\dist{{\it dist}}
\def\psn{{\it psn}}
\def\CrowdedList{CL}
\begin{document}


\pagestyle{fancy}
\fancyhead{}


\maketitle


\section{Introduction}
The use of {\em multi-agent systems} (MAS) for complex and enormous tasks in real-world applications has recently attracted considerable attention. Examples include transport robots in automated warehouses~\cite{wurman2008coordinating}, autonomous aircraft-towing vehicles~\cite{morris2016planning}, ride-sharing services~\cite{li2019efficient,yoshida2020multi}, office robots~\cite{veloso2015cobots}, and delivery systems with multiple drones~\cite{krakowczyk2018developing}. However, simply increasing the number of agents may lead to inefficiency owing to redundant movements and resource conflicts such as collisions. Therefore, to improve the overall performance, it is essential to have coordinated actions that avoid negative effects among agents. In particular, collision avoidance is essential in our application, which is a pickup and delivery system applied in a restricted environment where the agents are large heavy-duty robots carrying heavy and large-sized materials.
\par

This problem is formulated as a {\em multi-agent pickup and delivery} (MAPD) problem, where pickup and delivery tasks are assigned to individual agents simultaneously. The assigned agent then needs to move to the material storage area, load a specified material, deliver it to the required location, and unload it. Therefore, because there are numerous tasks to fulfill, the MAPD problem can be considered as an iteration of {\em multi-agent path-finding} (MAPF), in which multiple agents generate collision-free paths to their targets without collisions. Unfortunately, the MAPF problem is known to be an NP-hard problem to obtain the optimal solution~\cite{ma2016multi}, and thus the MAPD problem is more time-consuming.
\par

There have been many studies focusing on MAPF/MAPD problems~\cite{sharon2015conflict,ma2019lifelong,liu2019task,ma2017lifelong,nie2020effective,li2021lifelong,okumura2019priority}, and their results have been used in real-world applications. With these applications, generating plans without collisions and deadlocks becomes the central issue. For example, Okumura et al.~\cite{okumura2019priority} proposed {\em priority inheritance with backtracking} (PIBT) in which the agents decide the actions within a short time window to avoid deadlocks based on priorities and through local communication. Ma et al.~\cite{ma2017lifelong} proposed {\em token passing} (TP) with using the {\em holding task endpoints} (HTE)~\cite{liu2019task,li2021lifelong}, which exclusively holds the {\em endpoints} including the pickup/delivery nodes of the tasks, to avoid collisions. Liu et al.~\cite{liu2019task} introduced {\em reserving dummy paths} (RDP), which consistently reserve {\em dummy paths} to a unique parking location of each agent, allowing the agents to fulfill their tasks with the same endpoint simultaneously.
\par

However, these methods cannot be used in our target environments. For example, to guarantee completeness, PIBT requires the environment to be bi-connected; however, our environment cannot meet this requirement. For efficient movement, HTE and RDP assume a grid-like environment, which has many endpoints where agents can stay for any time length and/or many detours whose lengths are almost identical. Such requirements are possible in specially designed environments such as an automated warehouse. However, our environments are {\it ad hoc} and maze-like, such as a search-and-rescue or construction site, and usually have fewer pickup and/or delivery locations whose numbers may be unbalanced, resulting in congestion near a few of the endpoints. Moreover, detours are limited and often become much longer.
\par

Therefore, we propose a deadlock avoidance method, i.e., {\em standby-based deadlock avoidance} (SBDA), for improving the transportation efficiency while allowing agents to conduct their tasks with the same endpoint even in a maze-like environment. We integrate this method with TP although the number of endpoints is quite small in our target environments. The SBDA algorithm employs {\em standby nodes}, where an agent is guaranteed to wait for a finite amount of time near its destination (i.e., its endpoint). The set of standby nodes changes while the agents reserve such nodes to stay there; however, they can be determined efficiently in real-time using the {\em articulation-point-finding algorithm} (APF algorithm) in the graph theory~\cite{tarjan1972depth}. When an agent moves toward the endpoint where other agents have already arrived and/or are traveling as their destination, SBDA enables the agent to temporarily wait at one of standby nodes remaining near the endpoint to avoid becoming stuck or entering a deadlock and head toward its destination in turn. SBDA is the suboptimal algorithm in terms of transportation efficiency but guarantees completeness by using standby nodes. Herein, we evaluate the performance of our proposed method and compare it with that using HTE as a baseline under various experimental conditions. We then demonstrate that our proposed method outperforms the baseline method in maze-like restricted environments. Finally, we analyze the features of the proposed method by conducting experiments under various parameter settings and applying an ablation study.


\section{Related Work}
Studies on the MAPF/MAPD problem have been approached from a variety of perspectives~\cite{felner2017search,ma2017overview,salzman2020research}. One main approach is centralized planning and scheduling~\cite{sharon2015conflict,bellusci2020multi,boyarski2015icbs,boyrasky2015don,liu2019task}. For example, Sharon et al.~\cite{sharon2015conflict} proposed the {\em conflict-based search} (CBS) algorithm for the MAPF problem. CBS is a two-stage search approach, consisting of a {\em low-level search} in which agents generate their paths independently, and a {\em high-level search} in which a centralized planner generates optimal collision-free paths by receiving local plans from all agents. CBS and its extensions are used in many other centralized planners~\cite{bellusci2020multi,boyarski2015icbs,boyrasky2015don,liu2019task}. Although we can expect optimal solutions/paths for the MAPD instances, the computational cost of a centralized planner rapidly increases with an increase in the number of agents~\cite{ma2017lifelong}.
\par

Decentralized methods are more scalable and robust, and thus many studies have been conducted from this perspective~\cite{ma2017lifelong,okumura2019priority,wang2020walk,wang2011mapp}. However, because the plans are generated individually, such methods for the MAPD problem must have completeness as well as the ability to detect and resolve conflicts between them under certain restrictions. For example, Okumura et al.~\cite{okumura2019priority} proposed the PIBT for the MAPF/MAPD problem, in which agents determine the next nodes in accord with their priorities through communication with local agents. The PIBT is effective and has been extended for use in more general cases~\cite{okumura2019winpibt,okumura2021time}; however, the completeness is guaranteed only in bi-connected environments. Ma et al.~\cite{ma2017lifelong} proposed a TP in which an agent selects tasks whose endpoints have not yet been reserved by other agents, and generates its collision-free plan by exclusively referring to the token, i.e., a synchronized shared block of memory. However, many of these studies also assume that the environments are grid-like such that, to avoid collisions, there are many escape nodes and the same-length of detours between two points. However, applying them to our maze-like environments may lead to a reduced efficiency of transportation and planning owing to the small number of endpoints and few detours, whose lengths are quite different.
\par

By contrast, some studies~\cite{kala2012multi,damani2021primal,huang2021learning,ren2021ms,ren2021multi,zhang2020multi} have also considered an application to maze-like environments. For example, Damani et al.~\cite{damani2021primal} proposed {\em pathfinding via reinforcement and imitation multi-agent learning - lifelong} (PRIMAL$_2$), a distributed reinforcement learning framework for a {\em lifelong} MAPF (LMAPF), which is a variant of the MAPF in which agents are repeatedly assigned new destinations. However, they assumed that tasks are sparsely generated at random locations, and thus, unlike our environment, no local congestion occurs. Although other studies in the area of trajectory planning and robotics also assume maze-like environments~\cite{robinson2018efficient,tahir2019heuristic,dewang2018robust}, our study differs because we aim at completing the MAPD instance efficiently in a restrictive maze-like environment.
\par

Previous studies have focused on deadlock avoidance in the MAPD problem~\cite{liu2019task,ma2017lifelong,nie2020effective,li2021lifelong}, similar to our approach. For example, HTE~\cite{liu2019task,li2021lifelong} method assumes that the environment has many endpoints where agents can remain for a finite length of time; otherwise, the performance will decrease because fewer tasks can be executed in parallel. By contrast, RDP~\cite{liu2019task} always reserves dummy paths to the unique parking location of each agent to allow the agents to conduct their tasks with the same endpoint simultaneously; hence, numerous detours whose lengths are almost identical are necessary for achieving efficiency. However, our environment is maze-like and has fewer endpoints, and thus the use of endpoints is limited~\cite{nie2020effective}. It also has a limited number of detours whose lengths may differ considerably.
\par


\section{Preliminaries}
\subsection{Problem Formulation}
The MAPD problem consists of an agent set $\AgentSet=\{1, \dots,M\}$, a task set $\TaskSet=\{\tau_1, \dots,\tau_N\}$, and an undirected connected graph $G = (V,E)$ embeddable in a two-dimensional Euclidean space. Node $v \in V$ corresponds to a location, and an edge $(u,v) \in E$ ($u, v\in V$) corresponds to a path along which an agent can move between locations $u$ and $v$. We can naturally define the {\em length} of edge $(u,v)$ by denoting $l(u,v)$. The {\em distance} between nodes $v_1$ and $v_2$ is defined as the sum of the lengths of edges appearing in the shortest path from $v_1$ to $v_2$. We assume that an endpoint is set to a dead-end that has only one associated edge in $G$. Our agent is a heavy-duty forklift-like autonomous robot with a picker in front, which carries a heavy material (500 kg to 1 ton) and can pick up (load) or put down (unload) this material using a picker in a specific direction at a specific node. We introduce discrete-time $t \in \mathbb{Z}^+$, where $\mathbb{Z}^+$ is the set of positive integers.
\par

For agent $i\in \AgentSet$, we define the {\em orientation} $o_i^t \in \mathbb{Z}^+$ and moving {\em direction} $d_i^t \in \mathbb{Z}^+$ of $i$ at time $t$, where $0 \leq o_i^t, d_i^t < 360$ in $D$ increments, and $o_i^t = 0$ and $d_i^t = 0$ indicate the northward orientation and direction of $G$. In addition, the set of possible orientations is denoted as $\mathcal{D}$. For example, if $D = 90$, $\mathcal{D}=\{0,90,180,270\}\ni o_i^t,d_i^t$. We assume $D = 90$ for simplicity, whereas $D$ can have any number depending on the environmental structure.
\par

Agents can conduct the following actions: $\move$, $\rotate$, $\wait$, $\load$, and $\unload$ on any node. Using the length $l = l(u,v)$, the rotation angle $\theta \in \mathbb{Z}^+$, and the waiting time $t$, the durations of actions $\move$, $\rotate$, $\wait$, $\load$, and $\unload$ are denoted by $T_{mo}(l)$, $T_{ro}(\theta)$, $T_{wa}(t)$, $T_{ld}$, and $T_{ul}$, respectively. Suppose that $i$ is on $v$ at time $t$. By action $\move$, $i$ moves forward or backward along edge $(u,v)$ to $u$ after appropriately changing its orientation $o_i^t$ through a rotation. By $\rotate$, $i$ rotates $D$ degrees clockwise $(D)$ or counter-clockwise $(-D)$ from $o_i^t$, i.e., $o_i^{t+T_{ro}(D)} = o_i^t \pm D$, at $v$. Agent $i$ has a unique parking node $\park_i \in V$~\cite{liu2019task}, which is the starting location at $t = 0$, and returns and remains there if $i$ has no tasks to execute. Parking nodes are expressed by the red squares in Fig.~\ref{fig:environment}.
\par

The task $\tau_j$ is specified by the tuple $\tau_j=(\sigma_{\tau_j}^{ld},\sigma_{\tau_j}^{ul},\material_{\tau_j})$, where $\sigma_{\tau_j}^{ld}=(v_{\tau_j}^{ld},o_{\tau_j}^{ld})$ ($\in V\times\mathcal{D}$) are the location and orientation when loading a material $\material_{\tau_j}$, and $\sigma_{\tau_j}^{ul}=(v_{\tau_j}^{ul},o_{\tau_j}^{ul})$ ($\in V\times\mathcal{D}$) are the location and orientation when unloading a material $\material_{\tau_j}$. When an agent loads and unloads a material, it needs to be oriented in a specific direction, considering the direction of the picker. Agents have to complete all tasks in $\TaskSet$ without collisions or deadlocks and then return to their own $\park_i$.
\par

\subsection{Token Passing} \label{sec:TokenPassing}
Although the proposed SBDA can be adapted to some existing MAPD algorithms, in this paper, we show that it can be integrated into TP to efficiently solve an MAPD instance. TP~\cite{ma2017lifelong} is a well-known MAPD algorithm, in which agents choose tasks themselves and generate paths using the information in the token. A token is a synchronized shared memory block containing the current paths of all agents, tasks currently assigned agents, and the remaining tasks that are not assigned agents.
\par

The set of endpoints $V_{\it ep}$ ($\subset V$) of an MAPD problem consists of all {\em task endpoints} that are possible pickup/delivery locations of the tasks and {\em non-task endpoints} including the initial location (i.e., $\park_i$) of each agent.
TP assumes that any agent can stay at an endpoint for any finite
length of time without blocking the movement of other agents for their
current tasks. We denote the set of task
endpoints by $\TskEPSet$ ($\subset V_{\it ep}$). It is assumed that
the locations of all endpoints are given to the agents in
advance. Hence, when an agent chooses one task whose pickup/delivery
locations are {\em open} endpoints, meaning that they do not appear as
the endpoints of the executing plans in the current token, a
collision-free path is generated by looking at the content of the
token. HTE in TP then holds the task endpoints to avoid
collisions.
\par

Although not all MAPD instances are solvable, {\em well-formed} MAPD instances are always solvable~\cite{vcap2015complete}. For TP, an MAPD instance is well-formed if and only if (a) the number of tasks is finite, (b) there are not fewer non-task endpoints than agents, and (c) a path exists between any two endpoints that does not traverse other endpoints~\cite{ma2017lifelong}. In well-formed MAPD instances, agents can move to non-task endpoints (e.g., $\park_i$) at any time and stay there as long as necessary to avoid collisions with other agents. This action might be able to reduce the number of agents to avoid an overly crowded environment. Obviously, the assumption regarding the endpoints makes these requirements hold. However, agents cannot conduct tasks simultaneously if their endpoints are overlapped; hence, this approach reduces the efficiency considerably in maze-like environments such as our considered environment.
\par


\section{Standby-Based Deadlock Avoidance}
\subsection{Status Management Token}
We propose a novel deadlock avoidance method, i.e., SBDA, to improve the transportation efficiency while allowing agents to conduct tasks having the same endpoints even in a maze-like environment. We introduce the {\em status management token} (SMT), which is an extension of a token~\cite{ma2017lifelong} and a {\em synchronized block of information}~\cite{yamauchi2021path}, to manage the status of the planning, agents, and standby nodes and to detect conflicts with other agents, where we define a conflict as a situation in which multiple agents occupy the same node $v\in V$ or cross the same edge simultaneously. SMT contains the {\em reservation table} (RT), the {\em task execution status table} (TEST), and the {\em standby-node status table} (SST). RT is the set of RT tuples $(v, [s_v^i, e_v^i], i)$, which are valid reservation data used to generate collision-free plans, where $[s_v^i, e_v^i]$ is the occupancy intervals of agent $i$ for node $v$. When $e_v^i < t_c$ ($t_c$ is the current time), the tuple has been expired and deleted from RT. TEST is the set of TEST tuples $(\tau, v, i)$, where $\tau$ is the task currently being executed by $i$, and $v$ is the associated destination, which is the load or unload node specified in $\tau$. Therefore, two TEST tuples $(\tau, v_\tau^{ld}, i)$ and $(\tau, v_\tau^{ul}, i)$ are added when $i$ selects task $\tau$, and when $i$ arrived at $v_\tau^{ld}$ or $v_\tau^{ul}$, the corresponding entry is removed from TEST. SST is used to manage the status of all standby nodes, which are dynamically referred to and modified by SBDA. The structure of SST is described in Section~\ref{sec:FondingPSN}.
\par

SMT is a sharable memory area, and similar to a token in TP, only one agent can exclusively access it at a particular time. Although SMT may incur a slight performance bottleneck, the movement speeds of the robots are not fast and thus the time required for an overhead owing to a mutual exclusion is negligible for a realistic number of agents (e.g., less than 30 agents in the experiment environments as shown in Fig.~\ref{fig:environment}).

\subsection{Finding Potential Standby Nodes}\label{sec:FondingPSN}
Standby nodes are intuitively nodes by which, similar to non-task endpoints in TP, an agent is guaranteed to wait for a finite length of time on its way to the endpoint of the assigned task; however, they can be identified dynamically. We assume that, except for parking nodes, non-task endpoints are not given in advance. Let us define an {\em articulation point} (AP) and a {\em potential standby node}, which will be used as a standby node when needed:
\begin{definition}\label{def:ap}
A node in graph $G$ is an {\em articulation point} iff removing it and the associated edges will split the connected area of $G$ and increase the number of connected components.
\end{definition}
\begin{definition}\label{def:psn}
A {\em potential standby node} in graph $G$ is a node that is neither an AP, an endpoint, nor a dead-end in $G$.
\end{definition}
Let us denote the set of all potential standby nodes in $G = (V,E)$ by $\PStandByNodeSet(G)$ ($\subset V$). The next proposition is obvious from this definition.
\begin{proposition}\label{prop:connected}
If $G = (V,E)$ is a connected graph and $\PStandByNodeSet(G) \not= \varnothing$, then the subgraph generated by eliminating any node $v_\psn$ in $\PStandByNodeSet(G)$ from $G$ is connected.
\end{proposition}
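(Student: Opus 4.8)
The plan is to prove the contrapositive-free direct statement: given a connected graph $G=(V,E)$ with a nonempty set of potential standby nodes, I need to show that removing any single $v_\psn \in \PStandByNodeSet(G)$ leaves the remaining subgraph connected. The entire argument should hinge on unpacking Definition~\ref{def:psn}: a potential standby node is by definition \emph{not} an articulation point. Then I would invoke Definition~\ref{def:ap}, which states exactly that removing a node together with its incident edges increases the number of connected components precisely when that node is an articulation point. Since $v_\psn$ is not an AP, removing it cannot increase the number of connected components.

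First I would fix an arbitrary $v_\psn \in \PStandByNodeSet(G)$ and let $G' = G - v_\psn$ denote the subgraph obtained by deleting $v_\psn$ and its associated edges. Because $G$ is connected, it has exactly one connected component. By Definition~\ref{def:psn}, $v_\psn$ is not an articulation point of $G$, so by Definition~\ref{def:ap} its removal does \emph{not} increase the number of connected components; hence $G'$ has at most one connected component. The only remaining subtlety is to rule out the degenerate case where $G'$ has \emph{zero} components, i.e.\ $G'$ is empty. This would occur only if $V = \{v_\psn\}$, but in that single-node graph $v_\psn$ would be a dead-end (or, more precisely, would have no incident edges at all), contradicting the requirement in Definition~\ref{def:psn} that $v_\psn$ is not a dead-end. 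So $G'$ is nonempty, and I conclude that $G'$ has exactly one connected component, i.e.\ it is connected.

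Strictly speaking this is almost immediate once the definitions are laid side by side, so the ``proof'' is really just a careful citation of the two definitions rather than a substantive construction. The main thing to be careful about is the implicit logical direction of Definition~\ref{def:ap}: it is phrased as an ``iff,'' so I can legitimately read it as ``\emph{not} an AP $\Rightarrow$ removal does not split the graph,'' which is exactly what is needed. The one genuine gap to guard against is the empty-subgraph edge case mentioned above, together with ensuring that ``increase the number of connected components'' is interpreted so that staying at one component counts as connectivity being preserved; both are handled by the non-dead-end clause of the potential-standby-node definition.

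The step I expect to require the most care, such as it is, is making the degenerate/empty-graph case airtight and phrasing the appeal to Definition~\ref{def:ap} so that the non-AP property is clearly being used in the correct (forward) direction. Everything else follows directly from the definitions, so I would keep the write-up short and definitional rather than attempting any graph-search or path-based argument.
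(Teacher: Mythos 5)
Your proof is correct and takes essentially the same approach as the paper, which offers no separate argument at all but simply declares the proposition ``obvious from this definition'' --- i.e., exactly your unpacking that a potential standby node is by Definition~2 not an articulation point, so by Definition~1 its removal cannot increase the number of connected components of the connected graph $G$. Your additional handling of the degenerate single-node case is more care than the paper records (and is the only place where your reading is slightly loose, since the paper's dead-ends have degree one rather than zero), but it does not change the substance of the argument.
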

Therefore, even if an agent remains at a potential standby node, other agents can generate paths to reach their destinations without passing that node. We therefore have the following:
\begin{corollary}\label{cor:staying}
Any agent reserving a standby node can remain there for a finite length of time without blocking the movements of other agents.
\end{corollary}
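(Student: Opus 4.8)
The plan is to derive Corollary~\ref{cor:staying} directly from Proposition~\ref{prop:connected}, reducing the claim to two observations: first, that a standby node is by construction a potential standby node, and second, that connectivity of the residual graph is exactly what the other agents need in order to route around the occupied node. I would begin by recalling that when an agent reserves a standby node, that node is drawn from $\PStandByNodeSet(G)$, so let $v_\psn$ denote the reserved node. Since $G$ is a connected graph (given in the problem formulation) and $\PStandByNodeSet(G) \neq \varnothing$ because it contains $v_\psn$, Proposition~\ref{prop:connected} applies and tells us that the subgraph $G' = G \setminus \{v_\psn\}$ obtained by deleting $v_\psn$ and its incident edges is connected.

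Next I would argue that connectivity of $G'$ guarantees that every other agent can still reach its destination without using $v_\psn$. Concretely, any two nodes distinct from $v_\psn$ remain joined by a path in $G'$, so for any agent currently not at $v_\psn$, a route exists from its present location to any endpoint (load node, unload node, or parking node) that never enters $v_\psn$. Hence the agent sitting at $v_\psn$ does not sever any other agent's access to its targets; the only resource it consumes is the single node it occupies, which by the definition of a potential standby node is neither an endpoint nor a dead-end and therefore is not itself a required destination or a forced passage.

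The finiteness part of the statement should be handled by appealing to the reservation mechanism rather than to connectivity: the waiting agent reserves the node over a bounded occupancy interval $[s_{v_\psn}^i, e_{v_\psn}^i]$ in the reservation table, and because $T_{wa}(t)$ is a finite-duration action for any finite $t$, the stay is of finite length by construction. I would phrase the corollary's guarantee as the conjunction of these two facts: finite duration comes from the agent choosing a finite waiting time, and non-blocking comes from the structural Proposition~\ref{prop:connected}.

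The only genuinely delicate point, and the one I would treat most carefully, is the distinction between a static, one-shot deletion of a single node and the dynamic setting in which several agents may simultaneously reserve several standby nodes. Proposition~\ref{prop:connected} only asserts connectivity after removing \emph{one} potential standby node, whereas the MAPD system may have multiple agents parked at multiple standby nodes at once, and deleting several vertices need not preserve connectivity even if deleting each one individually does. I expect this to be the main obstacle, and I would address it by scoping the corollary's claim to the act of a \emph{single} agent reserving \emph{one} standby node at a time, so that when each agent checks availability it removes exactly the nodes already reserved and verifies that its candidate node lies in $\PStandByNodeSet$ of the \emph{current residual} graph; the real-time recomputation of standby nodes via the APF algorithm (already flagged in the text) is precisely the device that keeps the single-deletion hypothesis of Proposition~\ref{prop:connected} valid at every step, so I would note that the corollary is to be read relative to the dynamically updated set of standby nodes rather than the static $\PStandByNodeSet(G)$.
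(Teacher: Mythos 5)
Your proposal is correct and takes essentially the same route as the paper: the corollary is read off directly from Proposition~\ref{prop:connected}, with non-blocking following from connectivity of the graph after deleting the reserved node. The ``delicate point'' you flag about multiple simultaneous reservations is resolved exactly as you suggest and exactly as the paper does it --- standby nodes are always drawn from $\PStandByNodeSet(G_t)$ of the dynamically updated graph $G_t$ (recomputed via the APF algorithm each time an agent reserves a node), so the single-deletion hypothesis of Proposition~\ref{prop:connected} is maintained at every step.
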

The set of all potential standby nodes $\PStandByNodeSet(G)$ is efficiently identified using the APF algorithm such as Tarjan's algorithm~\cite{tarjan1972depth}, whose time complexity is O(|V|+|E|)~\cite{NUUTILA19949}. Now, we define a {\em standby node}, where $G_t$ is the modified subgraph of $G$ at time $t$, as below.
\begin{definition}
The potential standby node $v\in\PStandByNodeSet(G_t)$ is a {\em standby node} when agent $i\in\AgentSet$ reserves $v$ to remain there.
\end{definition}
\noindent
When the agent leaves standby node $v$, $v$ is no longer in standby node. We denote the set of all standby nodes with a reserving agent as $\StandByNodeSet_t=\{(v, i))\ |\ i\in\AgentSet \textrm{ reserves } v \textrm{ as standby node}\}$.
\par

When agents make a reservation to stay at a number of standby nodes (how such a reservation is created is explained in Section~\ref{sec:UseStandbyNodes}), other agents are prohibited to pass through them. This means that the structure of graph $G$ is temporally modified by eliminating the standby nodes (and the associated edges). This modified graph at time $t$ is then denoted by $G_t = (V_t, E_t)$. Thus, the set of potential standby nodes in $G_t$ is denoted by $\PStandByNodeSet(G_t)$, which is also efficiently identified using the APF algorithm. Note that $G = G_0$.
\par

Before agents start the tasks in an MAPD problem, SBDA generates the set of {\em associated potential standby nodes} $s(v_\tsk)$ for every endpoint $v_\tsk$. For endpoint $\forall v_\tsk$ ($\in \TskEPSet)$, SBDA calculates $s(v_\tsk)$ ($\subset \PStandByNodeSet(G)$) using
\[
s(v_\tsk) = \{v_{\it psn} \in\PStandByNodeSet(G) | \dist(v_{\it psn},v_\tsk) \leq
\alpha \},
\]
where $\dist(v_1,v_2)$ is the distance between $v_1$ and $v_2$, i.e., the length of the shortest path between them. Parameter $\alpha$ ($\geq0$) is the limit of the distance to a standby node from the endpoint $v_\tsk$. Note that it is possible that $s(v_\tsk) = \varnothing$ and $s(v_\tsk)\cap s(v_\tsk') \not= \varnothing$ for $v_\tsk,v_\tsk'\in \TskEPSet$.
Similarly, the set of potential standby nodes for $v_\tsk$ at $t$ is denoted by $s_t(v_\tsk)=s(v_\tsk)\cap\PStandByNodeSet(G_t)$. Therefore, $s(v_\tsk)=s_0(v_\tsk)$ is the set of the initial potential standby nodes for $G$ ($=G_0$). We also define the set of potential standby nodes that are not included in the associated potential standby nodes as
\[
\PStandByNodeSet^c(G_t) =
\PStandByNodeSet(G_t)\setminus\bigcup_{v\in \TskEPSet} s_t(v).
\]
An element in $\PStandByNodeSet^c(G_t)$ is called a {\em free potential standby node at $t$}. Note that it is possible that $\PStandByNodeSet^c(G_t) = \varnothing$.
\par

Information regarding the standby nodes is stored in SST. SST at $t$ consists of (1) the initial set of potential standby nodes, $\PStandByNodeSet(G)$, (2) the initial set of all pairs of endpoints and their potential standby nodes $\{(v_\tsk,s(v_\tsk))\ |\ v_\tsk\in \TskEPSet\}$, (3) the set of pairs of standby nodes and agents $i$ that reserve such nodes in the form of $(v_{sn}, i)\in\StandByNodeSet_t$, and (4) the {\em crowded list}, $\CrowdedList$ $(\subset\AgentSet)$, which is the set of agents $i$ that temporally remain at a free potential standby node in $\PStandByNodeSet^c(G)$.
\par

\begin{algorithm}[t]
 \caption{Task selection by agent $i$}
 \label{alg:taskselection}
 \begin{algorithmic}[1]
  \Function{SelectTask}{$i$}
  \State $\TaskSet'=$ Set of tasks satisfying Cond.~1.
  \State // where $\tau=(\sigma_{\tau}^{ld},\sigma_{\tau}^{ul},\material_{\tau})$,
    and $\sigma_\tau^{ld}=(v_{\tau}^{ld},o_\tau^{ld})$
    \If{$\TaskSet' \neq \varnothing$}
    \State $\tau^* \gets \argmin_{\tau \in \TaskSet'} \dist(v_c^i,v_\tau^{ld})$ // $v_c^i$: current location
    \State $\TaskSet\gets \TaskSet\setminus \tau^*$; \Return $\tau^*$
    \Else\ \Return false \label{alg:taskselectionfailure}
    \EndIf
    \EndFunction
  \end{algorithmic}
\end{algorithm}

\subsection{Task Selection Process}\label{sec:TaskSelection}
After the agents start the tasks in an MAPD problem, an agent with SBDA selects a task to conduct based on the potential standby nodes in SMT, decides its destination by selecting a standby node if necessary, and generates a path to it. The agent exclusively accesses the current SMT during this process.
\par

The pseudo-code of the task selection process {\sc SelectTask($i$)} based on the potential standby nodes is shown in Algorithm~\ref{alg:taskselection}. This is applied only when $\TaskSet\not=\varnothing$; otherwise, agent $i$ returns to its parking node $\park_i$. For $v_\psn\in \PStandByNodeSet(G_t)$, let $e_{v_\psn,t}^*$ be the last time that all agents will pass $v_\psn$ for all current plans. This can be calculated by looking at the elements $(v_\psn, [s_{v_\psn}^j, e_{v_\psn}^j], j)$ in RT of SMT, and if such an element does not exist in RT, we set $e_{v_\psn,t}^*=t_c$, where $t_c$ is the time when calculating $e_{v_\psn,t}^*$.
\par

Agent $i$ selects tasks
$\tau=(\sigma_{\tau}^{ld},\sigma_{\tau}^{ul},\material_{\tau})\in\TaskSet$ that hold the following conditions (Cond.~1) at time $t$, where $v_c^i$ is the current location of $i$, $\sigma_{\tau}^{ld}=(v_{\tau}^{ld},o_{\tau}^{ld})$ and $\sigma_{\tau}^{ul}=(v_{\tau}^{ul},o_{\tau}^{ul})$.
\begin{itemize}
\item[(1)] If $i$ remains at its parking node ($v_c^i=\park_i$), the crowded list, $\CrowdedList$, is empty.
\item[(2)] Its load node $v_\tau^{ld}$ is {\em open} (i.e., it does not appear as an endpoint in the $RT$), or $v_\tau^{ld}$ has non-empty associated potential standby nodes (i.e., $s_t(v_{\tau}^{ld})\not=\varnothing$) and $\exists v_\psn\in s_t(v_{\tau}^{ld})$ such that $e_{v_\psn}^* - t_c \leq \delta$.
\item[(3)] $|s_t(v_{\tau}^{ul})|+1$ is greater than the number of tuples that appear in $TEST$, whose destination is $v_{\tau}^{ul}$.
\end{itemize}
Here, because $i$ cannot begin to stay at $v_\psn$ before $e_{v_\psn}^*$, $\delta$ ($\geq 0$) is a threshold parameter of the margin time for reserving the standby node. The set of tasks that satisfies these conditions is denoted by $\TaskSet'$ [Line 2].
\par

Then, $i$ selects the closest task $\tau^*$, i.e., $\dist(v_c^i,v_{\tau^*}^{ld})$ in $G_t$ is the smallest, where $v_c^i$ is the current location of $i$ [Lines 4--6]. If $i$ cannot find such tasks, $i$ returns to $\park_i$. However, it is possible for $i$ to occasionally check whether other agents have completed their tasks and for some endpoints/standby nodes to become open while $i$  is going back and remains at $\park_i$.
\par

\begin{figure}[t]
  \centering
  \includegraphics[width=0.98\linewidth]{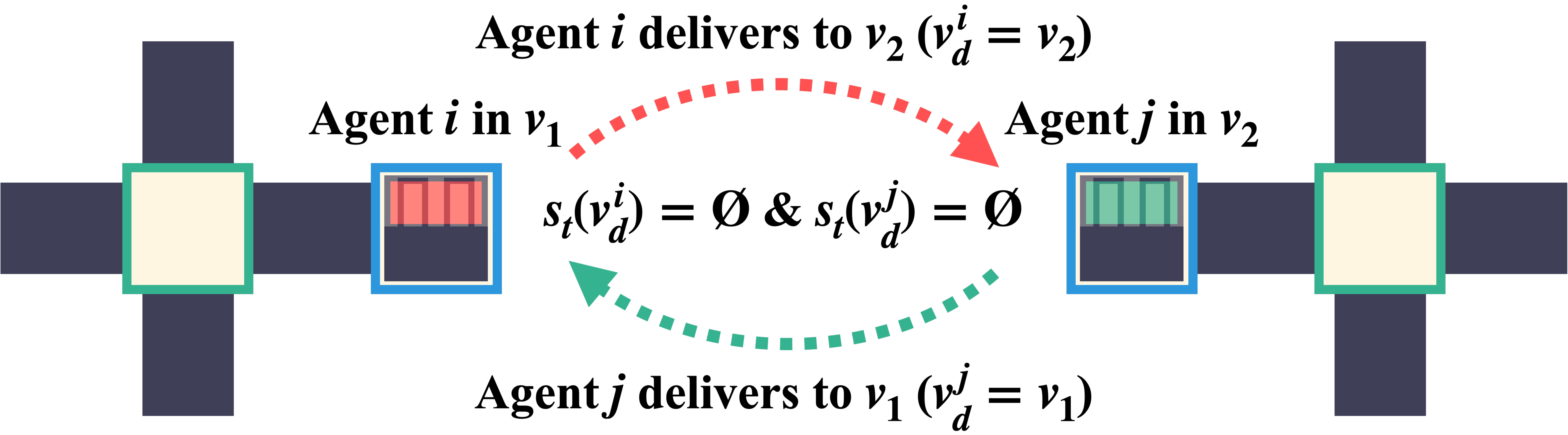}
  \caption{Deadlock state ($i$ and $j$ are on the same graph)}
  \label{fig:deadlock}
\end{figure}

\begin{algorithm}[t]
  \caption{Destination decision by agent $i$}
  \label{alg:destdecision}
  \begin{algorithmic}[1]
    \Function{DecideDest}{$i$, $v_d$, $v_c$}
    \State // $v_d$: $v_{\tau}^{ld}$, $v_{\tau}^{ul}$ or $\park_i$,
    \State // $v_{\tau}^{ld}$: pickup, $v_{\tau}^{ul}$: delivery, $\park_i$: parking node
    \If{$i\in \CrowdedList$} remove $i$ from $\CrowdedList$ \EndIf
    \If{$v_d$ satisfies one of Cond.~2, and $v_d$ is open}
    \State \Return $v_d$
    \ElsIf{$v_c\in s(v_d)$} \Return $v_c$
    \Else
    \State {$G_t^*=G_t\cup\{v_c\}$ // if $v_c\not\in\PStandByNodeSet$, $G_t=G_t^*$}
    \State {$V'_\psn(G_t^*) = \{v\in \PStandByNodeSet(G_t^*)\ |\ e_v^* - t_c \leq \delta\}$}
    \If{$S=V'_\psn(G_t^*)\cap s(v_d) \not=\varnothing$}
    \State $v_d \gets \argmin_{v\in S} (e_v^* - t_c)$
    \ElsIf{$S'=V'_\psn(G_t^*)\cap\PStandByNodeSet^c(G_t^*)\not=\varnothing$}
    \State{$v_d \gets \argmin_{v \in S'} \dist(v_d,v)$}
    \State{$\CrowdedList=\{i\}\cup \CrowdedList$}
    \Else
    \State $v_d \gets \park_i$ // $i$ returns to its parking nodes.
    \EndIf
    \EndIf
    \State \Return $v_d$
    \EndFunction
  \end{algorithmic}
\end{algorithm}

\subsection{Use of Standby Nodes}\label{sec:UseStandbyNodes}
After the previous process of task selection, agent $i$ calls the path planning process to generate a path to move to $v_\tau^{ld}$ or $v_\tau^{ul}$ of the selected task $\tau$. However, it is probable that these nodes are not open, and $i$ may instead have to generate the path to the temporary destination. Thus, before generating a path, $i$ calls function {\sc DecideDest($i$,$v_d^i$,$v_c^i$)} in Algorithm~\ref{alg:destdecision} at current time $t_c$ to decide the next actual destination, where $v_d^i$ is the destination of $i$, which is $v_\tau^{ld}$, $v_\tau^{ul}$, or $\park_i$, and $v_c^i$ is the current node of $i$, which is an endpoint, a standby node, or its parking node.
\par

Before describing Algorithm~\ref{alg:destdecision}, for $\forall i\in\AgentSet$, we introduce the set of second conditions (Cond.~2) mainly to prevent a side entry for approaching the endpoint.
\begin{itemize}
\item[(1)] $\dist(v_c^i,v_d^i)\leq\beta$, where $\beta$ ($>\alpha$) is a threshold for directly heading toward the endpoint.
\item[(2)] No agent heads to an $s_t(v_d^i)$.
\item[(3)] $v_d^i==\park_i$
\end{itemize}
If $i$ can satisfy one of these conditions, it heads directly
toward $v_d^i$ if possible; otherwise, it heads toward a standby
node of $v_d^i$. Note that $\beta$ is also the threshold preventing
other agents from being forced to wait too long at standby nodes owing
to a cutting in line.
\par

We briefly describe Algorithm~\ref{alg:destdecision}. If $v_d$
satisfies one of the conditions in Cond.~2, and $v_d$ is {\em open},
$i$ determines $v_d$ as the destination [Lines 6--7]. Note that
$\park_i$ is always open. If $v_c$ is a standby node of $v_d$, and
$v_d$ is not open, it returns $v_c$, remaining there for a while
longer [Line 8]. If not, $i$ tries to move a potential standby node of
$v_d$ as the temporary destination. Thus, it first calculates $V'_{\it
  psn}(G_t^*)$ [Line 11] by referring to RT and SST, where $G_t^*
= G_t\cup\{v_c\}$ is a subgraph of $G$ [Line 10]. If there is a
potential standby node where other agents will pass through within a
time of $\delta$, the most appropriate node is selected [Lines 12--13]
(see (2) in Cond.~1). If such a standby node does not exist, $i$
instead tries to select another free potential standby node
in $\PStandByNodeSet^c(G_t^*)$. If it can select such a node, $i$ is
added to the crowded list, $\CrowdedList$, demonstrating that it
cannot select the element of $s(v_d)$ because the environment is
crowded [Line 14--16]. Otherwise, it returns to its parking node [Line
  18]. Note that when $i$ selects and reserves the standby node as
$v_d$ [Lines 13 and 15], $(v_d, i)$ is added to $\StandByNodeSet_t$,
whereas $(v_c, i)$ is removed from $\StandByNodeSet_t$ in SST if
$i$ leaves the current standby node [Line 7].
\par

The temporal evacuation at a free standby node or parking node [Lines 14-18] allows an agent to avoid a deadlock, which occurs when agents $i$ and $j$ try to simultaneously exchange their unloading nodes and current locations and both nodes have no associated potential standby nodes. An example is shown in Fig.~\ref{fig:deadlock}, in which two agents wait until their destinations become open or $s_t(v_d)\not=\varnothing$, and this situation cannot be solved if their destinations remain unchanged. However, when one agent sets another node as a temporary destination, one of the unloading nodes become open and another agent can start to move. After $i$ arrives at standby node $v_{sn}$ instead of the endpoint specified by $\tau$, it invokes function {\sc DecideDest} to identify when $i$ can move to the actual endpoint when $i$ can obtain the right to access SMT. Then, agent $i$ generates the path to the next destination by using a path finding algorithm, releases $v_{sn}^i$, removes the corresponding entry $(v_{sn}^i,i)$ from $\StandByNodeSet_t$, unlocks SMT, and finally, leaves the standby node $v_{sn}$.
\par

Because only one agent can access SMT at a time, other agents that generate paths after agent $i$ reserves standby node $v_{sn}$ are prohibited to pass through $v_{sn}$. However, if $j$, which previously generates a path before agent $i$ reserves $v_{sn}$, can pass through it, $i$ must begin to wait at $v_{sn}$ after $j$ passes through $v_{sn}$; thus, $i$ should reach $v_{sn}$ with an appropriate delay by waiting somewhere, which may cause other delays to other agents. To reduce such a waiting time, the SBDA algorithm introduces Cond.~1 (2).
\par

Finally, we prove that our proposed method is complete for our well-formed MAPD instances. In our SBDA, we modify the well-formed condition (c) of TP described in Section~\ref{sec:TokenPassing} as follows:
\begin{itemize}
  \item[(c')] When any agent generates a plan at time $t$, there exists a path between any two nodes of the endpoints or potential standby nodes at $t$ that does not traverse other endpoints or standby nodes registered at $t$.
\end{itemize}
It is clear that the algorithm described thus far always satisfies condition (c'). In fact, the structure of graph $G$ is temporarily modified when agent $i$ selects a standby node to remain in from the potential standby nodes; however, another agent can always find a path to its destination without passing the standby nodes or endpoints in SBDA. This is because SBDA uses the APF algorithm to generate $\PStandByNodeSet(G_t)$ when $i$ selects the next task or a standby node instead of the destination in the task. Moreover, similar to TP, the agent exclusively accesses the current SMT and generates a path in turn. Therefore, we obtain the following theorem:
\begin{theorem}
SBDA solves all well-formed MAPD instances.
\end{theorem}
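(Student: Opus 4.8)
The plan is to show that an SBDA execution terminates with every task in $\TaskSet$ completed, with no collision or deadlock ever occurring, and with every agent back at its $\park_i$; I would model this on the completeness proof of TP by Ma et al., treating the standby-node machinery as the only genuinely new ingredient. The three facts I would lean on are Proposition~\ref{prop:connected} (removing a potential standby node preserves connectivity), Corollary~\ref{cor:staying} (an agent reserving a standby node waits only finitely and never blocks others), and the modified well-formedness condition (c'), which the preceding discussion already argues SBDA maintains at every planning step.

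First I would establish \emph{safety}, i.e.\ collision- and deadlock-freedom of each committed plan. Because SMT is accessed exclusively and agents commit paths sequentially, at the moment agent $i$ plans it sees all live reservations in RT and all reserved standby nodes in $\StandByNodeSet_t$. The target it chooses in {\sc DecideDest} is always its endpoint, a node of $s(v_d)\subset\PStandByNodeSet(G_t)$, a free potential standby node in $\PStandByNodeSet^c(G_t^*)$, or $\park_i$; in each case (c') together with Proposition~\ref{prop:connected} guarantees a path in the temporally modified graph $G_t$ that avoids every occupied endpoint and every reserved standby node, so $i$ can always realize a conflict-free plan to the chosen target. Hence no reachable system state contains a collision.

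Next I would argue \emph{liveness}. The key unconditional fallback is that $\park_i$ is always open, and by condition (b) the non-task endpoints number at least $M=|\AgentSet|$, so all agents can simultaneously retreat to and wait at their parking nodes; by (c') such retreat paths always exist. I would then prove progress by induction on the number of uncompleted tasks: whenever tasks remain, I can drive all but one agent to parking (or to a standby node, where by Corollary~\ref{cor:staying} they wait only finitely), freeing the contested load/unload endpoints so the remaining agent satisfies Cond.~1 for its closest feasible task and completes it. The explicit deadlock of Fig.~\ref{fig:deadlock} is resolved by exactly this mechanism: {\sc DecideDest} sends one of the two agents to a free potential standby node and appends it to $\CrowdedList$, opening the contested unload node for the other.

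The hard part will be making the liveness induction airtight, i.e.\ ruling out a \emph{livelock} in which agents repeatedly shuttle between standby and parking nodes without ever completing a task. To close this I would define a well-founded progress measure — primarily the number of remaining entries in $\TaskSet$ and $TEST$, refined by a secondary ordering on agent configurations — and show that from any state the scheduler can strictly decrease it. The two facts that make this possible are that Corollary~\ref{cor:staying} bounds every standby wait so no agent can stall another indefinitely, and that the threshold $\delta$ in Cond.~1~(2) together with $\beta$ in Cond.~2 prevents unbounded cutting-in-line delays; combining these with the always-available parking retreat shows that a task-completing move is eventually enabled from every state, and the finiteness of $\TaskSet$ from condition (a) then forces termination.
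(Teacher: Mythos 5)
Your overall skeleton matches the paper's own argument: the paper likewise reduces completeness to condition (c'), Proposition~\ref{prop:connected}, Corollary~\ref{cor:staying}, the always-open parking node, and the finiteness of $\TaskSet$, and its (quite terse) proof is essentially your liveness case analysis --- an agent with no task retreats to $\park_i$; an agent with a task heads to its endpoint if open, and otherwise to a standby node or parking node, from which it can move once the endpoint opens. The safety half you spell out is left implicit in the paper, following from exclusive SMT access exactly as you say.

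The genuine problem is in the step you yourself single out as the hard one. You close the livelock argument by asserting that ``Corollary~\ref{cor:staying} bounds every standby wait so no agent can stall another indefinitely.'' That is not what the corollary says: it says an agent reserving a standby node \emph{can} remain there for a finite length of time \emph{without blocking the movements of other agents}. It is a statement about the waiter not obstructing others, not a bound on how long the waiter must wait; using it to bound waits is circular, since the finiteness of waits is precisely the progress property you are trying to establish. What actually makes occupied endpoints become open again --- and what the paper states as the very first sentence of its proof --- is that {\sc SelectTask} and {\sc DecideDest} force every agent to vacate an endpoint immediately after its load or unload completes, regardless of whether it has a next task. That fact, combined with the temporary-evacuation mechanism that resolves the Fig.~\ref{fig:deadlock} exchange and with turn-based exclusive access to SMT, is what lets one conclude that a waiting agent's target endpoint eventually opens. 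Your well-founded progress measure would go through once you add this endpoint-vacation property as an explicit lemma; without it, nothing in your argument rules out an endpoint being held forever, and the measure cannot be shown to decrease.
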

\begin{proof}
First, from functions {\sc SelectTask} and {\sc DecideDest}, any agent will leave from any endpoint, $v_{\it ep}\in V_{\it ep}$, immediately after a load or unload task regardless of whether it has the next task. Suppose that agent $i$ is now an endpoint or a standby node. If $i$ has no task to conduct, $i$ can generate a path to its parking node $\park_i$. Otherwise, $i$ heads to one of the endpoints, $v_{\it ep}$. If $v_{\it ep}$ is open, $i$ can generate a path to $v_{\it ep}$. If not, $i$ selects a standby node or its parking node to head toward and generates a path to the selected node. Finally, $i$ can remain at the reserved standby node or the parking node for any finite length of time (Corollary~\ref{cor:staying}), and $i$ can generate a path to $v_{\it ep}$ if $v_{\it ep}$ becomes open.
\end{proof}


\begin{table}
  \caption{Parameter values used in the experiments}\label{table:expSetting}
  \centering
  \begin{tabular}{lll}
    \toprule
    Description & Parameter & Value\\
    \midrule
    No. of agents & $M$ & 2 to 30 \\
    No. of tasks & $N$ &100 \\
    Orientation/direction increments & $D$ & 90 \\
    Duration of $move$ per length 1 & $T_{mo}(1)$ & 10 \\
    Duration of $rotate$ & $T_{ro}(D)$ & 20 \\
    Durations of $\load$ and $\unload$ & $T_{ld}, T_{ul}$ & 20 \\
    Durations of $\wait$ & $T_{wa}(t)$ & $t$ \\
    Margin time for reserving standby node & $\delta$ & 100\\
    Threshold for direct heading endpoint & $\beta$ & 20 \\
    \bottomrule
  \end{tabular}
\end{table}

\begin{figure}[t]
  \centering
  \begin{minipage}[b]{0.43\hsize}
    \centering
    \includegraphics[width=0.98\hsize]{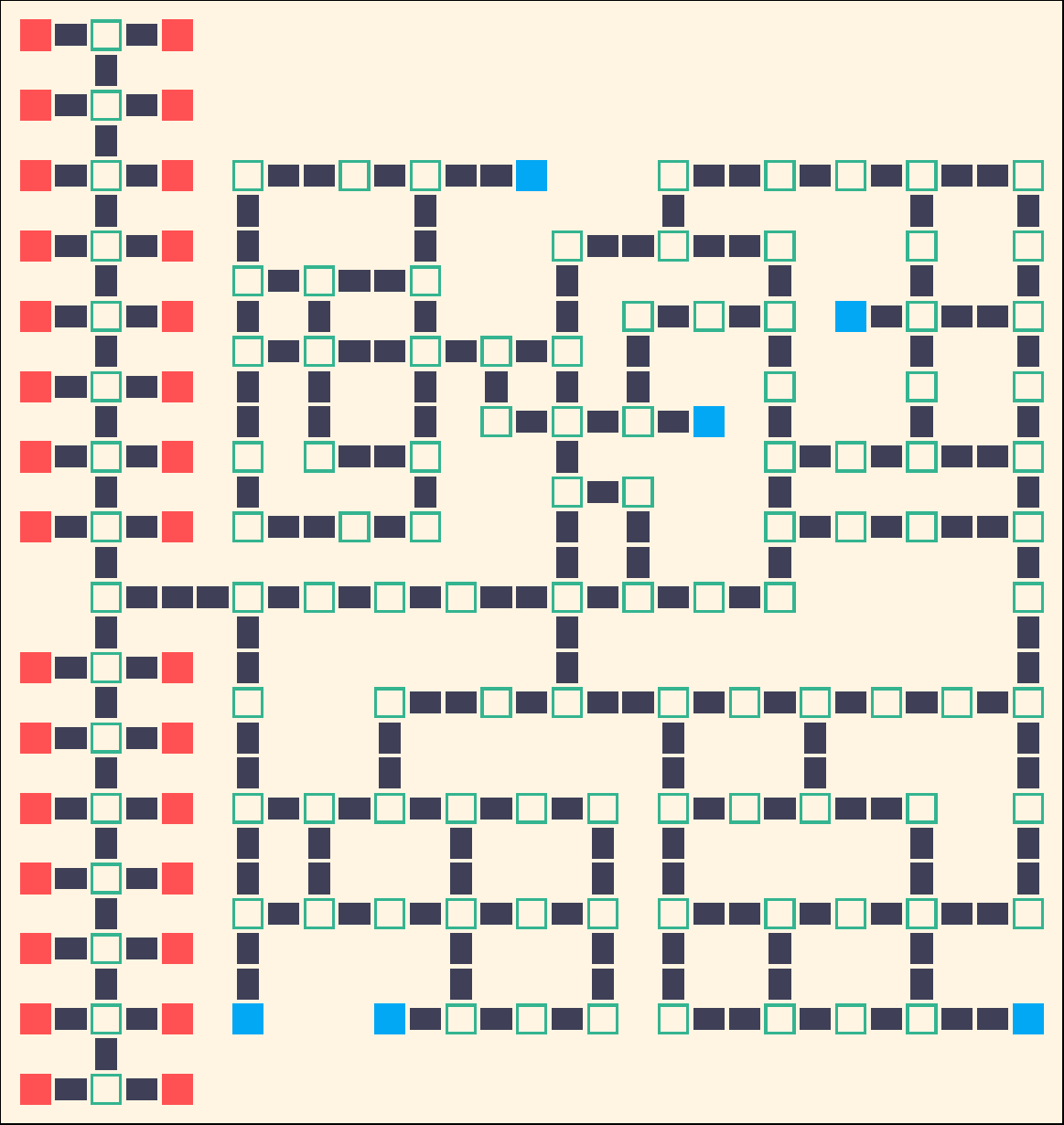}
    \subcaption{Environment 1}\label{subfig:env1}
  \end{minipage}
  \hfill
  \begin{minipage}[b]{0.52\hsize}
    \centering
    \includegraphics[width=0.98\hsize]{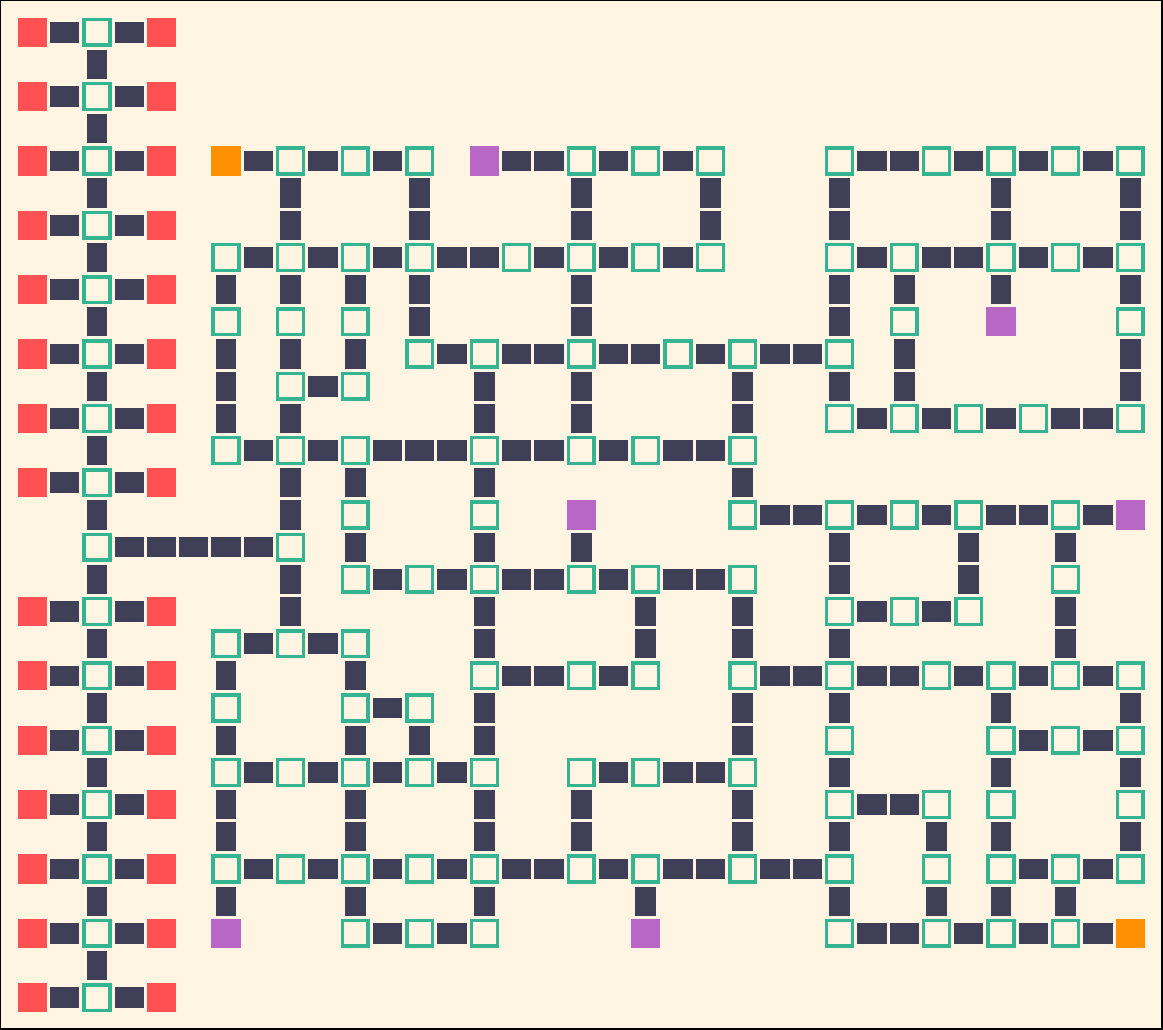}
    \subcaption{Environment 2}\label{subfig:env2}
  \end{minipage}
  \caption{Experiment environments (red, parking nodes; blue, task endpoints; orange, pickup only locations; purple, delivery only locations; hollow green, nodes; black, edges)}\label{fig:environment}
\end{figure}

\section{Experiments and Discussion} \label{sec:ExpAndDisc}
\subsection{Experiment Setting}
To evaluate the performance of our proposed method for executing the MAPD problem, we conducted the experiments under two different environments and compared the results with those using HTE as a baseline. In HTE, the agent selects the task whose loading and unloading nodes do not overlap with other endpoints of tasks that are currently being executed by other agents and whose loading node is the closest to the current node, $v_c^i$. Then, the agent generates the path to the loading node, followed by the path from the loading node to the unloading node of the task. If it cannot select such a task, it returns to the parking node. We used the space-time $A^*$ as the path finding algorithm for both methods, which is also used in TP. The heuristic function $h$ is defiend by $h(l_{ma}, \theta) = T_{mo}(l_{ma}) + T_{ro}(\theta)$, where $l_{ma}$ is the Manhattan distance between the current node and the destination, and $\theta$ is the difference between the current orientation and the required orientation at the destination. Function $h$ is clearly admissible because it does not consider the actual path length and the appropriate orientation change through a rotation before the move. The first environment (Env.~1) is a maze-like environment with few task endpoints ($|\TskEPSet| = 6$), assuming a construction site (Fig.~\ref{subfig:env1}). Nodes are set at the intersections and ends of the edges. Agents can rotate, wait, and load/unload materials only at nodes. Task endpoints are expressed by the blue squares in Fig.~\ref{subfig:env1}, where agents can both load and unload their materials. A break in the edge in this figure indicates the length of a block having a length of 1.
\par

The second environment (Env.~2) is also a maze-like environment with fewer task endpoints ($|\TskEPSet|=8$), and the number of pickup/delivery locations is skewed, as shown in Fig.~\ref{subfig:env2}. In this figure, orange squares are the pickup locations in which agents can only load their materials, and the purple squares are the delivery locations where agents can only unload their materials. The initial locations of the agents are randomly assigned to the parking nodes, which are expressed by the red squares in both environments. One hundred tasks are initially generated by randomly selecting pickup and/or delivery locations from the blue, orange, and purple squares according to the experimental setting, and added to the $\TaskSet$. Note that Envs.~1 and 2 are not clearly bi-connected.
\par

To evaluate our proposed method, we measured the {\em makespan}, i.e., the time required to complete all tasks in $\TaskSet$, and the {\em runtime}, i.e., the total CPU time for task selection, destination decision, and path planning for all tasks by all agents. Makespan indicates the transportation efficiency, whereas the runtime indicates the planning efficiency. Other parameter values are shown in Table~\ref{table:expSetting}. We conducted our experiments on a 3.00-GHz Intel 8-Core Xeon E5 with $64$ GB of RAM. The experiment results below are the average of 50 trials with different random seeds. Sample videos of our experiments can be found at \url{https://youtube.com/playlist?list=PLKufA_6vumDU01_hYNWihEN4IGlTh_oZ8}.
\par

\begin{figure}[t]
  \centering
  \begin{minipage}[b]{0.49\hsize}
    \centering
    \includegraphics[width=0.98\hsize]{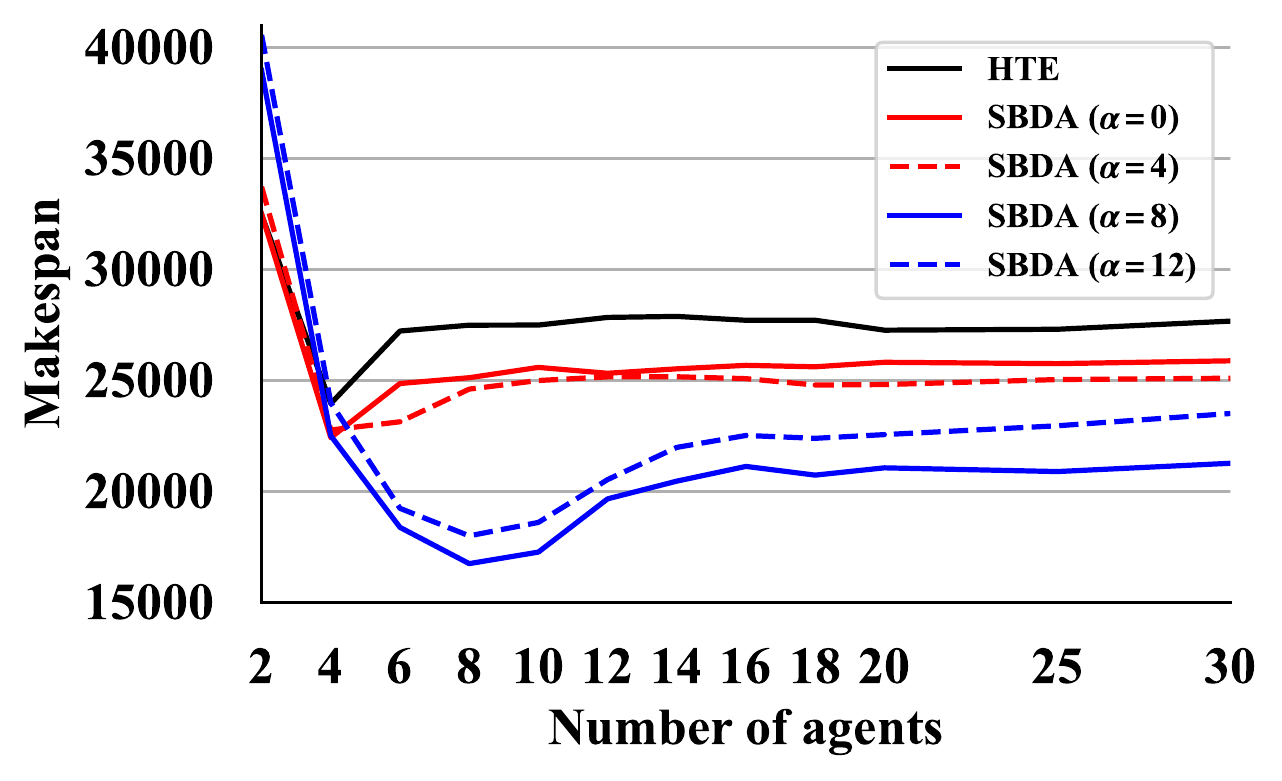}
    \subcaption{Makespan}\label{subfig:exp1-env1-makespan}
  \end{minipage}
  \hfil
  \begin{minipage}[b]{0.49\hsize}
    \centering
    \includegraphics[width=0.98\hsize]{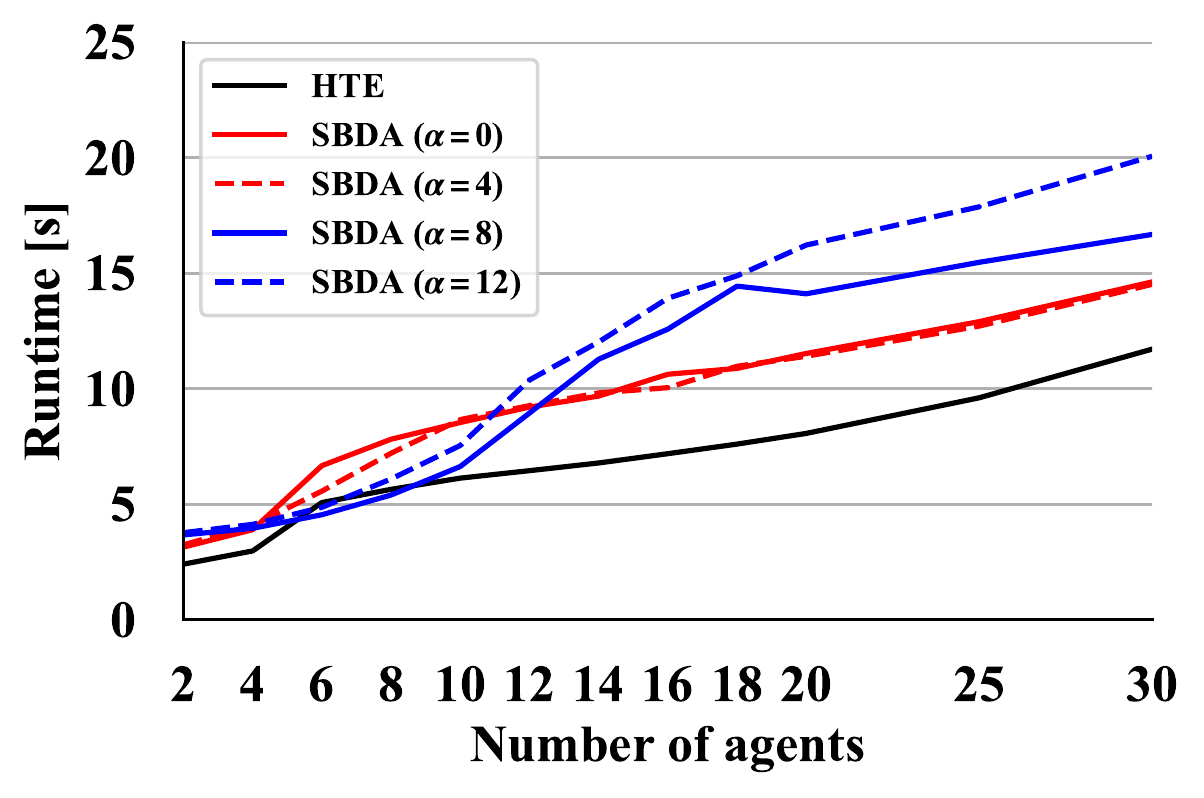}
    \subcaption{Runtime}\label{subfig:exp1-env1-runtime}
  \end{minipage}
  \caption{Comparison of SBDA and HTE (Env.~1)}
  \label{fig:exp1-env1}
\end{figure}

\begin{figure}[t]
  \centering
  \begin{minipage}[b]{0.49\hsize}
    \centering
    \includegraphics[width=0.999\hsize]{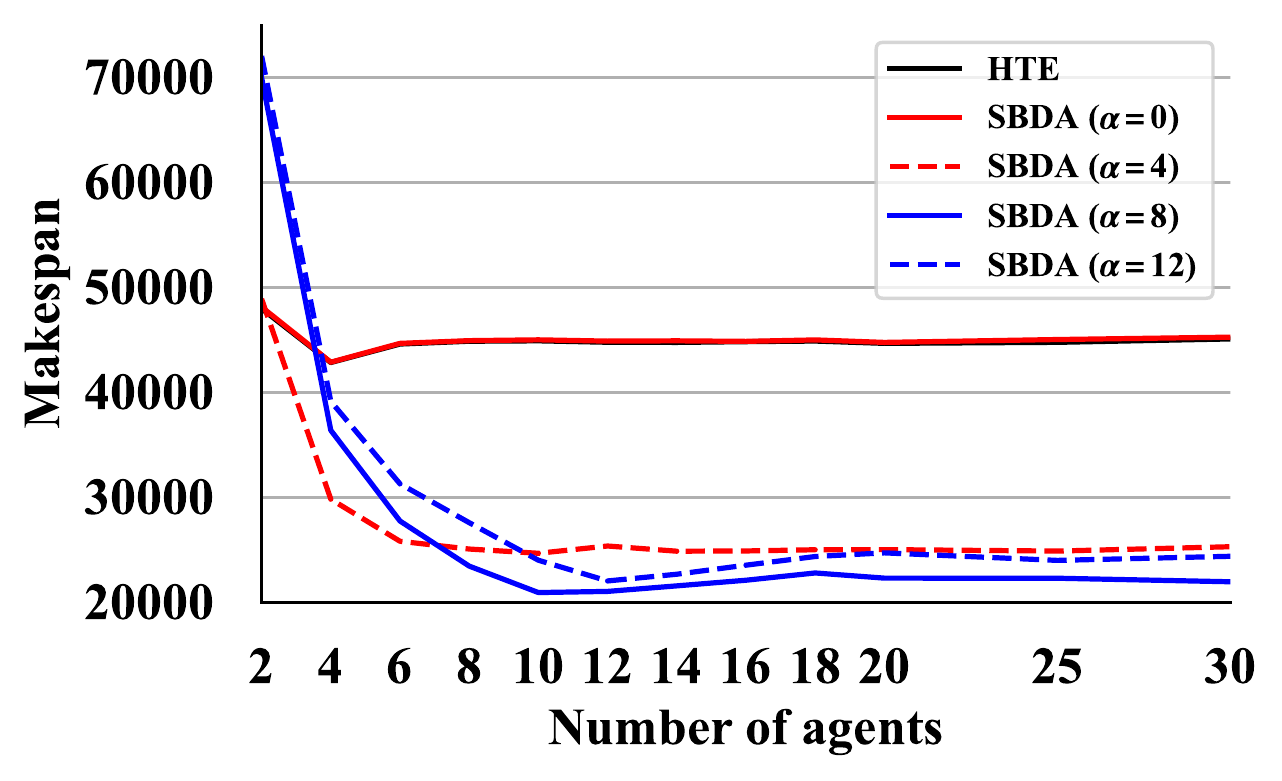}
    \subcaption{Makespan}\label{subfig:exp1-env2-makespan}
  \end{minipage}
  \hfil
  \begin{minipage}[b]{0.49\hsize}
    \centering
    \includegraphics[width=0.999\hsize]{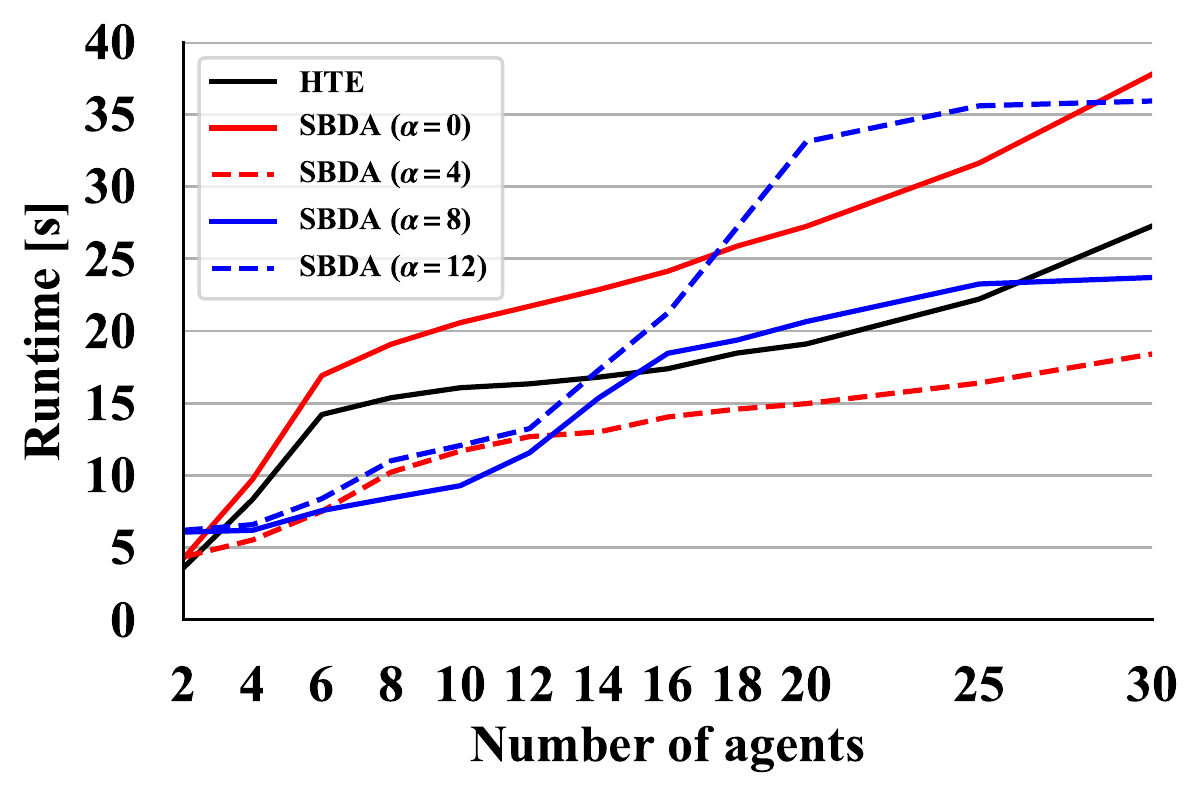}
    \subcaption{Runtime}\label{subfig:exp1-env2-runtime}
  \end{minipage}
  \caption{Comparison of SBDA and HTE (Env.~2)}
  \label{fig:exp1-env2}
\end{figure}

\subsection{Exp.~1: Performance Comparison}
In the first experiment (Exp.~1), we compared the performances of our
proposed SBDA method and the baseline HTE under Envs.~1 and 2. We
plotted the results of Env.~1 in Fig.~\ref{fig:exp1-env1}. We focus
particularly on the results when $\alpha=8$, which exhibited the
highest transportation efficiency (a detailed discussion of the effect
of the parameter $\alpha$ of SBDA is given in
Section~\ref{subsec:exp2}).
\par

Figure~\ref{subfig:exp1-env1-makespan} indicates that SBDA
significantly reduces the makespan compared to the HTE method. In
particular, when the number of agents is $M = 8$, SBDA reduces the
makespan of HTE by approximately 39\%. In HTE, agents cannot
simultaneously execute tasks that have the same endpoint. Hence,
only three or four agents can execute tasks in parallel even if $M$ is
larger because $|V_{tsk}| = 6$ under Env.~1. By contrast, because
multiple agents can simultaneously execute tasks whose loading and unloading
nodes overlapping with other executing tasks by using the standby
nodes effectively, SBDA considerably improves the
makespan through the higher parallelism of the task executions. As an
exception, when $M = 2$, the makespan of SBDA is higher than that
of HTE because when $M$ is extremely small compared to the number
of task endpoints $|V_{tsk}|$, agents can easily find tasks whose
endpoints do not overlap and can always execute pickup-and-delivery
tasks simultaneously without considering the standby nodes.
\par

Figure~\ref{subfig:exp1-env1-runtime} shows that the runtime for SBDA slightly increases compared to HTE. The only CPU usage in HTE comes from the task selection and path planning that generate a collision-free path in the less crowded environment. Hence, even if the number of agents increases, the runtime does not significantly increase, and is only used for a confirmation of the task selection. Meanwhile, in SBDA, many agents can select the tasks to execute and generate many paths mainly to endpoints and standby nodes, thus increasing the runtime. However, considering the parallel behavior, the CPU time spent per agent is not large.
\par

The experiment results under Env.~2, which is plotted in Fig.~\ref{fig:exp1-env2}, indicate that SBDA can improve the performance much more than HTE, and the length of the makespan, i.e., the average time required to complete the MAPD instances, is less than half that of HTE. For example, when $M = 10$, SBDA reduces the makespan of HTE by approximately 53\%, which is a larger improvement than that under Env.~1. Because there are only two loading nodes, the number of agents working simultaneously is limited in HTE, and the makespan remains high. For SBDA, instead of many agents moving toward a small number of loading locations without any control, the agents can automatically wait at the standby nodes close to the loading nodes and move there in turn, which can increase the number of agents working at the same time. Furthermore, because Cond.~1 prevents too many agents from entering the work area, and Cond.~2 prevents the agents from ignoring the agents waiting at the standby nodes and cutting in to move toward the endpoint first, such conditions prevent excessive congestion, and thus contribute to the high efficiency.
\par

\begin{figure*}[t]
  \centering
  \begin{minipage}[b]{0.27\hsize}
    \centering
    \includegraphics[width=0.98\hsize]{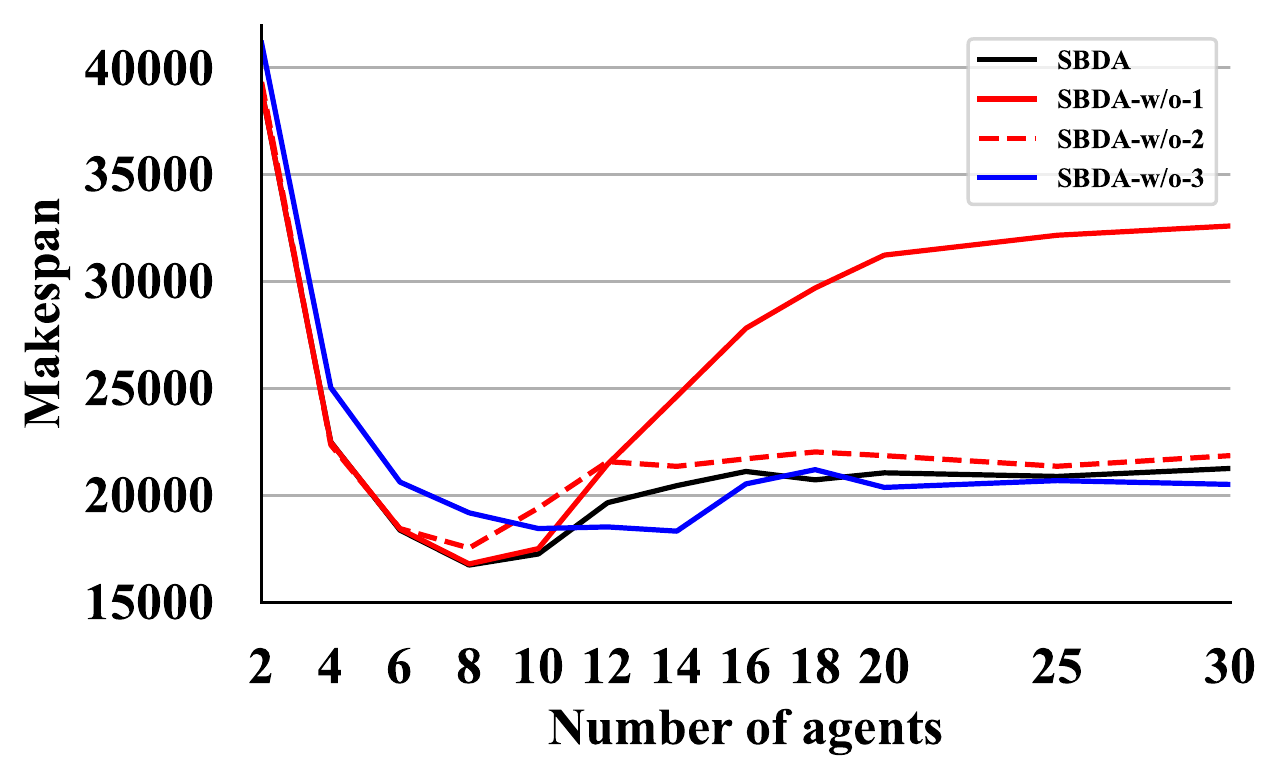}
    \subcaption{Makespan}\label{subfig:exp2-env1-makespan}
  \end{minipage}
  \hfil
  \begin{minipage}[b]{0.27\hsize}
    \centering
    \includegraphics[width=0.92\hsize]{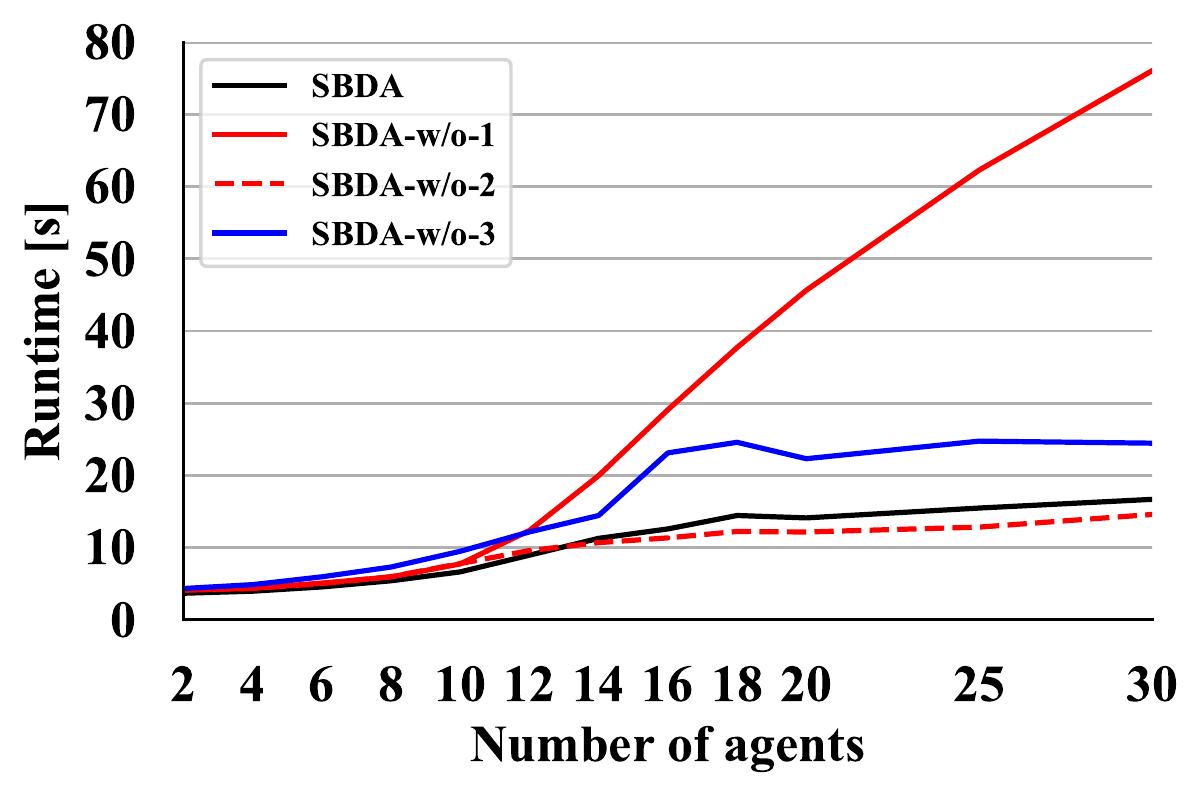}
    \subcaption{Runtime}\label{subfig:exp2-env1-runtime}
  \end{minipage}
  \hfil
  \begin{minipage}[b]{0.27\hsize}
    \centering
    \includegraphics[width=0.98\hsize]{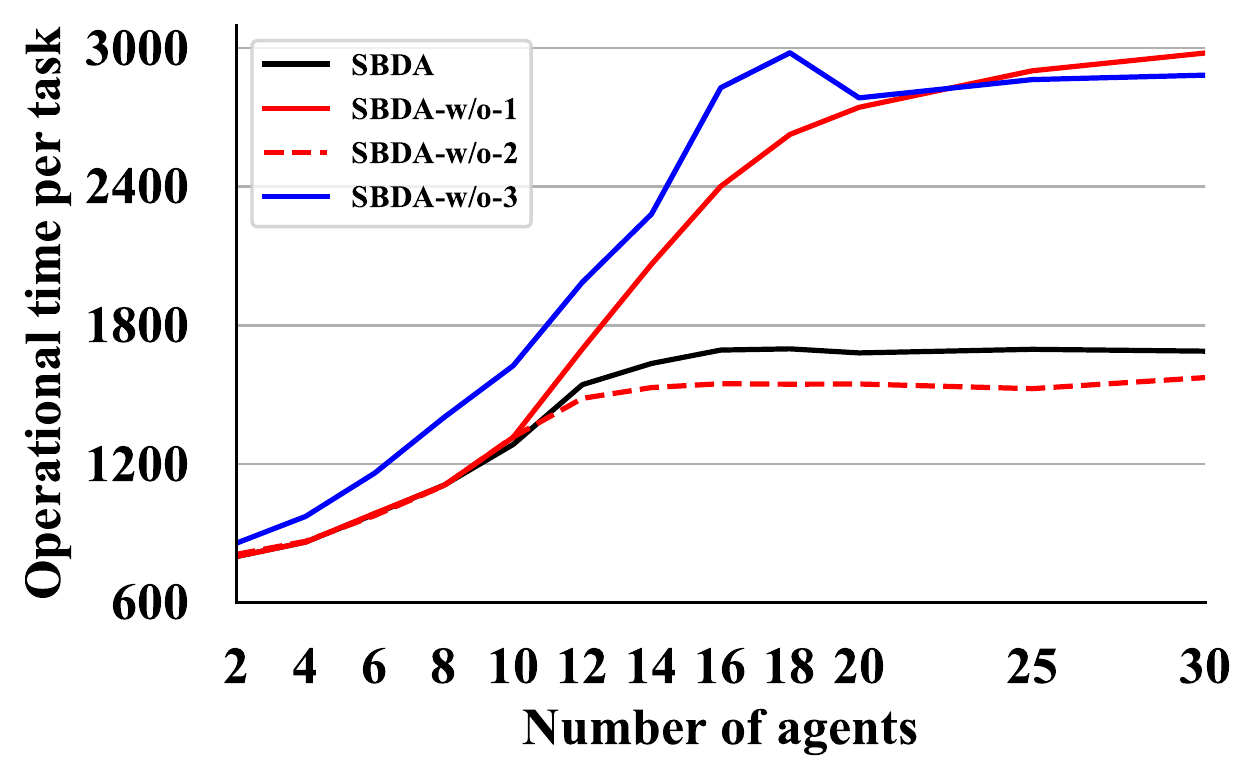}
    \subcaption{Operational time}\label{subfig:exp2-env1-operational-time}
  \end{minipage}
  \caption{Ablation study with task selection conditions (Env.~1)}
  \label{fig:exp2-env1}
\end{figure*}

\begin{figure*}[t]
  \centering
  \begin{minipage}[b]{0.27\hsize}
    \centering
    \includegraphics[width=0.98\hsize]{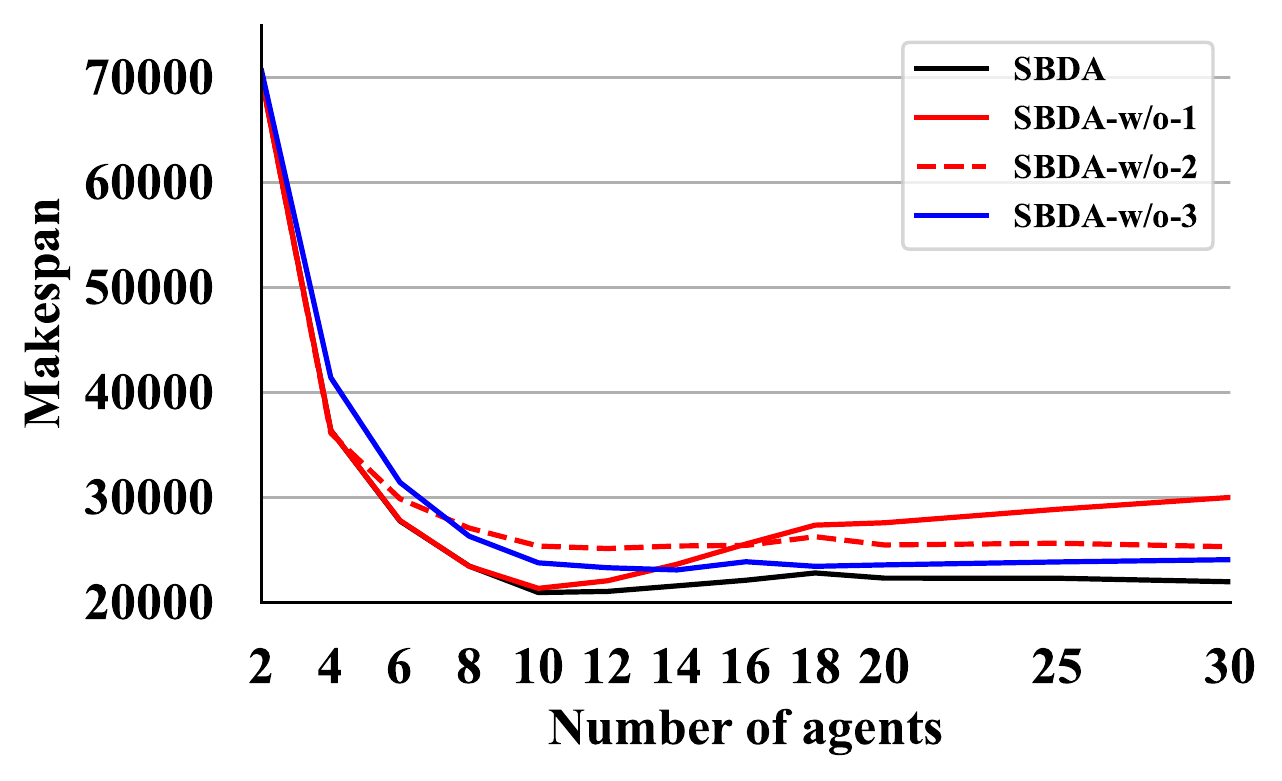}
    \subcaption{Makespan}\label{subfig:exp2-env2-makespan}
  \end{minipage}
  \hfil
  \begin{minipage}[b]{0.27\hsize}
    \centering
    \includegraphics[width=0.95\hsize]{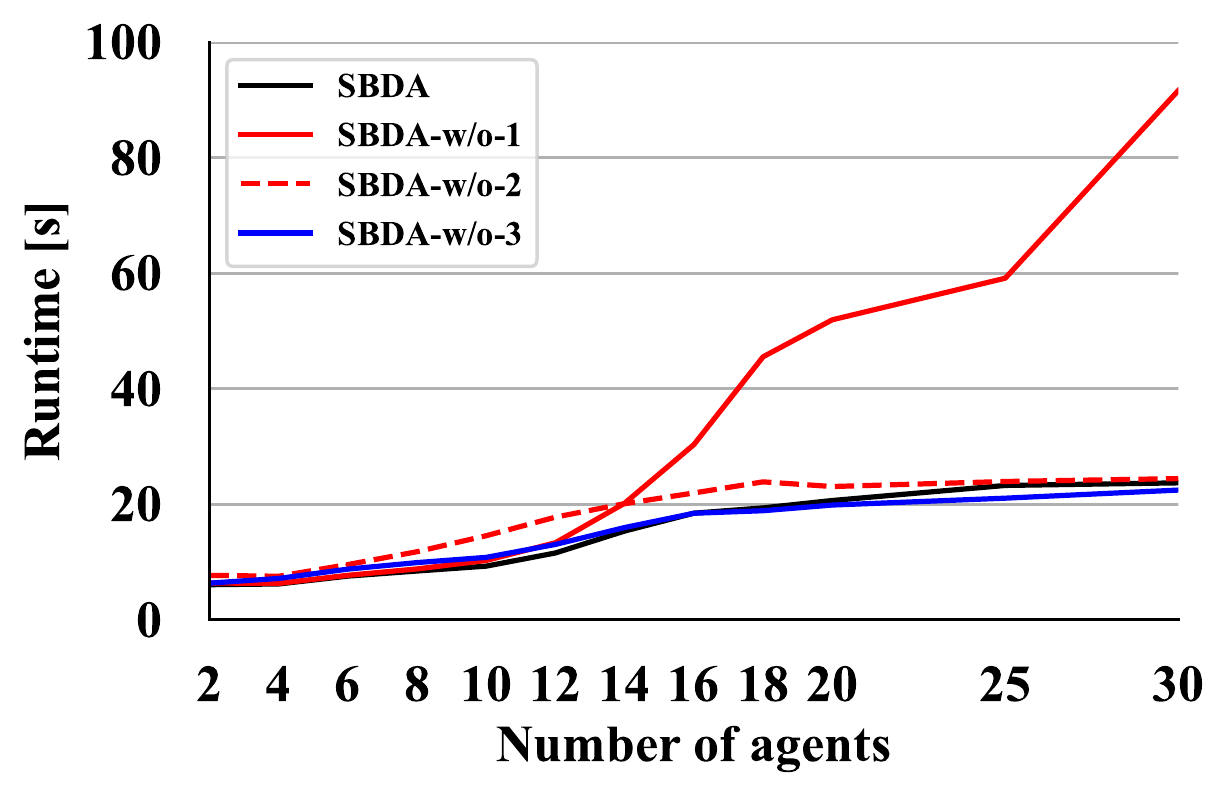}
    \subcaption{Runtime}\label{subfig:exp2-env2-runtime}
  \end{minipage}
  \hfil
  \begin{minipage}[b]{0.27\hsize}
    \centering
    \includegraphics[width=0.98\hsize]{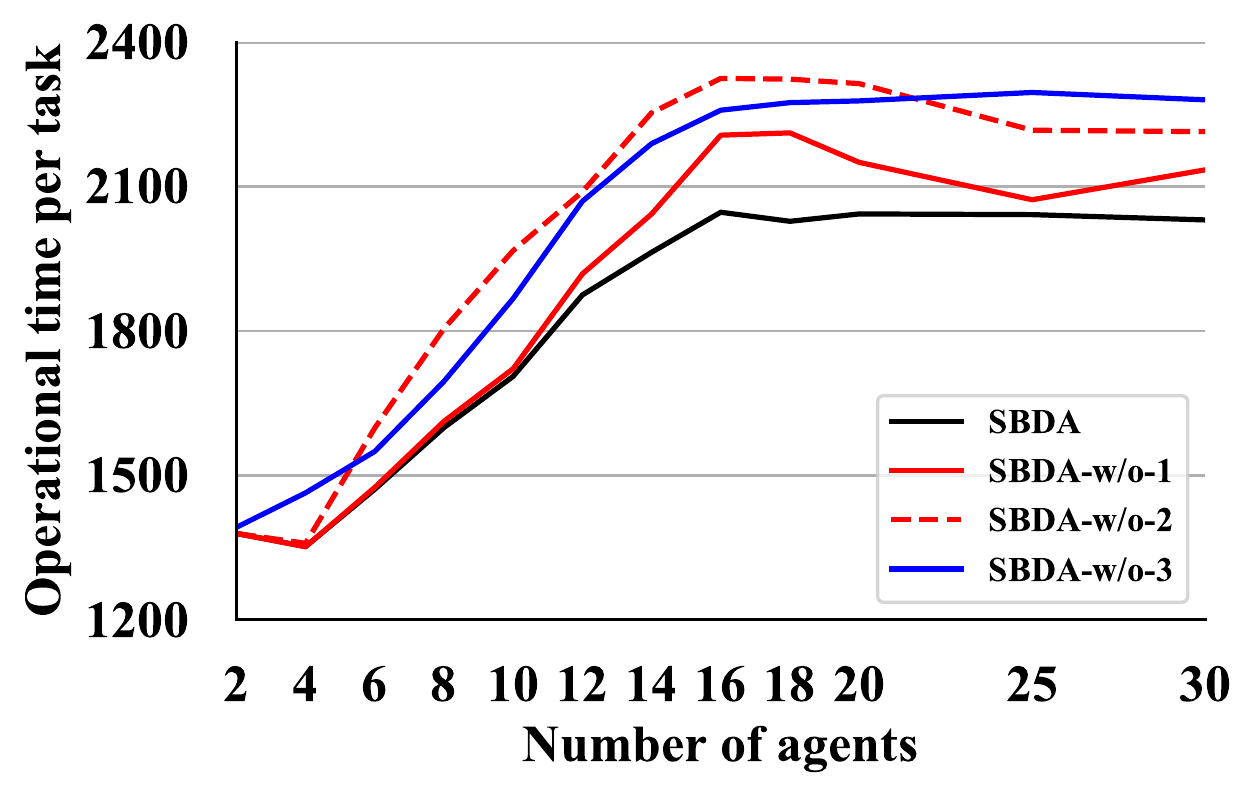}
    \subcaption{Operational time}\label{subfig:exp2-env2-operational-time}
  \end{minipage}
  \caption{Ablation study with task selection conditions (Env.~2)}
  \label{fig:exp2-env2}
\end{figure*}

\subsection{Exp.~2: Features of SBDA}
\label{subsec:exp2}
We conducted a number of experiments to confirm the impact of
parameter $\alpha$, which limits the distance between standby nodes
and the endpoint $v_\tsk$ under Exp.~1, and then, Cond.~1 through an
ablation study in the second experiment (Exp.~2). We also
report the effect of Cond.~2 in the Appendix.
\par

\subsubsection{Impact of parameter $\alpha$}\label{subsubsec:alpha}
First, to analyze the features of SBDA in Exp.~1, we examined the
impact of the parameter $\alpha$ on the performance. We
conducted the same experiments under Envs.~1 and 2 with values of
$\alpha = 0, 4, 8,$ and $12$. The results are also shown in
Figs.~\ref{fig:exp1-env1} and \ref{fig:exp1-env2}. From
Figs.~\ref{subfig:exp1-env1-makespan} and
\ref{subfig:exp1-env2-makespan}, the makespan of both environments
decreases as the value of $\alpha$ increases from zero to 8, and
conversely, increases when $\alpha = 12$. The larger the value of
$\alpha$, the larger the number of associated potential standby nodes
for each task endpoint. This means that (2) and (3) in Cond.~1 of the
task selection (Section~\ref{sec:TaskSelection}) are more easily
satisfied. Hence, more agents left their parking nodes as long as the
environment was not too crowded (Cond.~1 (1)), resulting in an
increase in the parallelism during the task executions and a decrease
of the makespan.
\par

However, when $\alpha$ was too large, such as $\alpha = 12$, the makespan worsened. There seem to be three possible reasons for this. First, too much parallelism in a task execution causes congestion in the environment, leading to larger waiting times and longer detours. In SBDA, agents need to move toward their destinations without passing through the standby nodes. Although the connectivity of the graph $G_t$ is maintained because of the property of the standby nodes, if more nodes in $G_t$ are eliminated, the paths to the endpoints will be lengthened and the agents will be forced to make significant detours, resulting in long wait times at the standby nodes. As the second reason, the agents have to remain in standby nodes far from their task endpoints. Hence, when an agent starts toward an endpoint, the endpoint is exclusively held for a slightly longer period of time until the agent arrives, resulting in a chain reaction of reduced transportation efficiency. Finally, because more agents can fulfill their tasks, local congestion is likely to occur, particularly biasing the destinations.
\par

We can also see from Figs.~\ref{subfig:exp1-env1-makespan} and \ref{subfig:exp1-env2-makespan} that the number of agents for achieving the highest efficiency depends on the value of $\alpha$ as well as the structure of the environment. However, it seems that in all cases, except for $\alpha = 0$ in Env.~2, SBDA exhibited a better performance than HTE for $M\geq 4$. Note that under Env.~2, the performance was the highest when $M = 10$ - $12$, which is slightly larger than that of Env.~1, and did not decrease significantly when $M > 12$, unlike the performance under Env.~1 when $M > 8$. This is because Env.~2 has only two pickup locations, which are thus more prone to localized congestion than Env.~1. Nevertheless, as the number of agents increases, Cond.~1 (2) of task selection can steer agents to tasks whose pickup nodes are less crowded and ultimately eliminate the congestion bias, thus reducing the makespan. In particular, this effect appeared more strongly under Env.~2 than under Env.~1, and thus, Cond.~1 (2) can prevent the agents from entering too much of the work area, averting a loss in efficiency.
\par

If we look at Figs.~\ref{subfig:exp1-env1-runtime} and \ref{subfig:exp1-env2-runtime}, the runtime required for planning is quite large when $\alpha = 0$ and $M$ is between 6 and 10 under Env.~1 and when $M\geq 4$ under Env.~2, although the number of agents moving in the work area seemed small. This may be because agents in the parking nodes frequently probe for a possible task selection, whereas agents with tasks do not do so.
\par

\subsubsection{Ablation study}
Focusing on Cond.~1 of the task selection (Section~\ref{sec:TaskSelection}), we compared SBDA with SBDA without Cond.~1 (1) (SBDA-w/o-1), SBDA without Cond.~1 (2) (SBDA-w/o-2), and SBDA without Cond.~1 (3) (SBDA-w/o-3) as an ablation study to verify the effect of each condition. Figs.~\ref{fig:exp2-env1} and \ref{fig:exp2-env2} show the comparative results of SBDA and ablation methods under Env.~1 and Env.~2, respectively. First, from Fig.~\ref{subfig:exp2-env1-makespan}, the makespan of SBDA-w/o-1 was almost the same as that of SBDA under $M\leq 10$, whereas the makespan increased as $M$ increased after $M = 12$, particularly under Env.~1. This indicates that Cond.~1 (1) contributed significantly to preventing many agents from entering the work area and crowding it. Under Env.~2, there are only two pickup nodes, and Cond.~1 (2) also prevents agents from entering the crowded work area; thus, the makespan did not increase significantly. This analysis also supports the fact that the runtime of SBDA-w/o-1 rapidly increases (Figs.~\ref{subfig:exp2-env1-runtime} and \ref{subfig:exp2-env2-runtime}). Second, the minimum values of the makespan of SBDA-w/o-2 and SBDA-w/o-3 were larger than those of SBDA. If Cond.~1 (2) or (3) was removed, the agent selected a task without considering the status of the load or unload node or the associated potential standby nodes, resulting in a temporary evacuation when heading toward the load or unload node. However, this did not frequently occur in SBDA.
\par

In Figs.~\ref{subfig:exp2-env1-operational-time} and
\ref{subfig:exp2-env2-operational-time}, we also plot the {\em
  operational time per task} or simply the {\em operational time},
which is the average time required to complete one task. In general,
as the parallelism increases, the environment becomes more crowded and
the operational time increases. However, SBDA-w/o-2 has the smallest
value under Env.~1
(Fig.~\ref{subfig:exp2-env1-operational-time}). Because Cond.~1 (2)
was ineffective, many agents evacuated the free potential standby
nodes and $CL$ was likely to be non-empty. As the results indicate,
Cond.~1 (1) limited the number of agents entering the work area,
reducing the parallelism. However, in Env.~2, the operational time of
SBDA-w/o-2 increased because the agents had to stay at the free
potential standby nodes much longer because of the limited number of
loading nodes. Moreover, this also occurred because, by removing
Cond.~1 (2), the agents always selected the tasks with the closest
loading nodes without considering congestion; in particular, the
agents in the parking nodes were biased toward to the upper-left
loading node under Env.~2. In other cases, i.e., in SBDA-w/o-1 and
SBDA-w/o-3, the operational time was large because too many agents entered the work area.
\par

Finally, Figs.~\ref{subfig:exp2-env1-operational-time} and
\ref{subfig:exp2-env2-operational-time} show that the operational time
of SBDA-w/o-3 tended to be large under both environments. Through the
removal of Cond.~1 (3), the agents were likely to select tasks without
considering the status of the unload nodes. Hence, they could not move
to the unload nodes or their associated potential standby nodes after
the load was completed and were repeatedly forced to temporarily
remain at free potential standby nodes or parking nodes with carrying the materials. Such interruptions in task execution lengthened the operation time.
\par


\section{Conclusion}
We presented a deadlock avoidance method, called SBDA, for the MAPD problem to improve the transportation efficiency even in a maze-like environment. The central idea of SBDA is the use of {\em standby nodes} in which, even if an agent stays there, the connectivity of the environment remains. Furthermore, standby nodes can be identified with a low computational cost using the articulation-point-finding algorithm used in graph theory and are thereby found in real-time. SBDA guarantees completeness for well-formed MAPD instances by effectively using standby nodes that are guaranteed to wait for any finite amount of time. We evaluated the proposed method in comparison with a well-known conventional method, HTE, in restricted maze-like environments based on our envisioned applications. Our experiment results demonstrated that SBDA considerably outperforms the conventional method. We also analyzed the features of our proposed method and observed that the parallelism of the task execution can be controlled by changing the value of $\alpha$, the parameter used for selecting standby nodes, and that the transportation efficiency can be significantly improved by appropriately setting the value of $\alpha$.
\par

To improve the flexibility and usability of SBDA for real-world applications, we plan to study a method for determining the appropriate value of $\alpha$ from a graph structure.



\begin{acks}
This work was partly supported by JSPS KAKENHI Grant Numbers 20H04245 and 17KT0044.
\end{acks}



\bibliographystyle{ACM-Reference-Format}
\bibliography{reference.bib}


\begin{thebibliography}{37}


\ifx \showCODEN    \undefined \def \showCODEN     #1{\unskip}     \fi
\ifx \showDOI      \undefined \def \showDOI       #1{#1}\fi
\ifx \showISBNx    \undefined \def \showISBNx     #1{\unskip}     \fi
\ifx \showISBNxiii \undefined \def \showISBNxiii  #1{\unskip}     \fi
\ifx \showISSN     \undefined \def \showISSN      #1{\unskip}     \fi
\ifx \showLCCN     \undefined \def \showLCCN      #1{\unskip}     \fi
\ifx \shownote     \undefined \def \shownote      #1{#1}          \fi
\ifx \showarticletitle \undefined \def \showarticletitle #1{#1}   \fi
\ifx \showURL      \undefined \def \showURL       {\relax}        \fi
\providecommand\bibfield[2]{#2}
\providecommand\bibinfo[2]{#2}
\providecommand\natexlab[1]{#1}
\providecommand\showeprint[2][]{arXiv:#2}

\bibitem[\protect\citeauthoryear{Bellusci, Basilico, and Amigoni}{Bellusci
  et~al\mbox{.}}{2020}]%
        {bellusci2020multi}
\bibfield{author}{\bibinfo{person}{Matteo Bellusci}, \bibinfo{person}{Nicola
  Basilico}, {and} \bibinfo{person}{Francesco Amigoni}.}
  \bibinfo{year}{2020}\natexlab{}.
\newblock \showarticletitle{{Multi-Agent Path Finding in Configurable
  Environments}}. In \bibinfo{booktitle}{\emph{{Proceedings of the 19th
  International Conference on Autonomous Agents and MultiAgent Systems}}}.
  \bibinfo{pages}{159--167}.
\newblock


\bibitem[\protect\citeauthoryear{Boyarski, Felner, Stern, Sharon, Tolpin,
  Betzalel, and Shimony}{Boyarski et~al\mbox{.}}{2015}]%
        {boyarski2015icbs}
\bibfield{author}{\bibinfo{person}{Eli Boyarski}, \bibinfo{person}{Ariel
  Felner}, \bibinfo{person}{Roni Stern}, \bibinfo{person}{Guni Sharon},
  \bibinfo{person}{David Tolpin}, \bibinfo{person}{Oded Betzalel}, {and}
  \bibinfo{person}{Eyal Shimony}.} \bibinfo{year}{2015}\natexlab{}.
\newblock \showarticletitle{{ICBS: improved conflict-based search algorithm for
  multi-agent pathfinding}}. In \bibinfo{booktitle}{\emph{{Twenty-Fourth
  International Joint Conference on Artificial Intelligence}}}.
\newblock


\bibitem[\protect\citeauthoryear{Boyrasky, Felner, Sharon, and Stern}{Boyrasky
  et~al\mbox{.}}{2015}]%
        {boyrasky2015don}
\bibfield{author}{\bibinfo{person}{Eli Boyrasky}, \bibinfo{person}{Ariel
  Felner}, \bibinfo{person}{Guni Sharon}, {and} \bibinfo{person}{Roni Stern}.}
  \bibinfo{year}{2015}\natexlab{}.
\newblock \showarticletitle{{Don't split, try to work it out: bypassing
  conflicts in multi-agent pathfinding}}. In
  \bibinfo{booktitle}{\emph{{Twenty-Fifth International Conference on Automated
  Planning and Scheduling}}}.
\newblock


\bibitem[\protect\citeauthoryear{{\v{C}}{\'a}p, Vok{\v{r}}{\'\i}nek, and
  Kleiner}{{\v{C}}{\'a}p et~al\mbox{.}}{2015}]%
        {vcap2015complete}
\bibfield{author}{\bibinfo{person}{Michal {\v{C}}{\'a}p},
  \bibinfo{person}{Ji{\v{r}}{\'\i} Vok{\v{r}}{\'\i}nek}, {and}
  \bibinfo{person}{Alexander Kleiner}.} \bibinfo{year}{2015}\natexlab{}.
\newblock \showarticletitle{{Complete decentralized method for on-line
  multi-robot trajectory planning in well-formed infrastructures}}. In
  \bibinfo{booktitle}{\emph{{Twenty-Fifth International Conference on Automated
  Planning and Scheduling}}}. \bibinfo{pages}{324--332}.
\newblock


\bibitem[\protect\citeauthoryear{Damani, Luo, Wenzel, and Sartoretti}{Damani
  et~al\mbox{.}}{2021}]%
        {damani2021primal}
\bibfield{author}{\bibinfo{person}{Mehul Damani}, \bibinfo{person}{Zhiyao Luo},
  \bibinfo{person}{Emerson Wenzel}, {and} \bibinfo{person}{Guillaume
  Sartoretti}.} \bibinfo{year}{2021}\natexlab{}.
\newblock \showarticletitle{{PRIMAL$_{2} $: Pathfinding Via Reinforcement and
  Imitation Multi-Agent Learning-Lifelong}}.
\newblock \bibinfo{journal}{\emph{IEEE Robotics and Automation Letters}}
  \bibinfo{volume}{6}, \bibinfo{number}{2} (\bibinfo{year}{2021}),
  \bibinfo{pages}{2666--2673}.
\newblock


\bibitem[\protect\citeauthoryear{Dewang, Mohanty, and Kundu}{Dewang
  et~al\mbox{.}}{2018}]%
        {dewang2018robust}
\bibfield{author}{\bibinfo{person}{Harshal~S Dewang}, \bibinfo{person}{Prases~K
  Mohanty}, {and} \bibinfo{person}{Shubhasri Kundu}.}
  \bibinfo{year}{2018}\natexlab{}.
\newblock \showarticletitle{{A robust path planning for mobile robot using
  smart particle swarm optimization}}.
\newblock \bibinfo{journal}{\emph{Procedia computer science}}
  \bibinfo{volume}{133} (\bibinfo{year}{2018}), \bibinfo{pages}{290--297}.
\newblock


\bibitem[\protect\citeauthoryear{Felner, Stern, Shimony, Boyarski, Goldenberg,
  Sharon, Sturtevant, Wagner, and Surynek}{Felner et~al\mbox{.}}{2017}]%
        {felner2017search}
\bibfield{author}{\bibinfo{person}{Ariel Felner}, \bibinfo{person}{Roni Stern},
  \bibinfo{person}{Solomon~Eyal Shimony}, \bibinfo{person}{Eli Boyarski},
  \bibinfo{person}{Meir Goldenberg}, \bibinfo{person}{Guni Sharon},
  \bibinfo{person}{Nathan Sturtevant}, \bibinfo{person}{Glenn Wagner}, {and}
  \bibinfo{person}{Pavel Surynek}.} \bibinfo{year}{2017}\natexlab{}.
\newblock \showarticletitle{{Search-based optimal solvers for the multi-agent
  pathfinding problem: Summary and challenges}}. In
  \bibinfo{booktitle}{\emph{{Tenth Annual Symposium on Combinatorial Search}}}.
\newblock


\bibitem[\protect\citeauthoryear{Huang, Dilkina, and Koenig}{Huang
  et~al\mbox{.}}{2021}]%
        {huang2021learning}
\bibfield{author}{\bibinfo{person}{Taoan Huang}, \bibinfo{person}{Bistra
  Dilkina}, {and} \bibinfo{person}{Sven Koenig}.}
  \bibinfo{year}{2021}\natexlab{}.
\newblock \showarticletitle{{Learning Node-Selection Strategies in
  Bounded-Suboptimal Conflict-Based Search for Multi-Agent Path Finding}}. In
  \bibinfo{booktitle}{\emph{{Proceedings of the 20th International Conference
  on Autonomous Agents and MultiAgent Systems}}}. \bibinfo{pages}{611--619}.
\newblock


\bibitem[\protect\citeauthoryear{Kala}{Kala}{2012}]%
        {kala2012multi}
\bibfield{author}{\bibinfo{person}{Rahul Kala}.}
  \bibinfo{year}{2012}\natexlab{}.
\newblock \showarticletitle{{Multi-robot path planning using co-evolutionary
  genetic programming}}.
\newblock \bibinfo{journal}{\emph{Expert Systems with Applications}}
  \bibinfo{volume}{39}, \bibinfo{number}{3} (\bibinfo{year}{2012}),
  \bibinfo{pages}{3817--3831}.
\newblock


\bibitem[\protect\citeauthoryear{Krakowczyk, Wolff, Ciobanu, Meyer, and
  Hrabia}{Krakowczyk et~al\mbox{.}}{2018}]%
        {krakowczyk2018developing}
\bibfield{author}{\bibinfo{person}{Daniel Krakowczyk}, \bibinfo{person}{Jannik
  Wolff}, \bibinfo{person}{Alexandru Ciobanu}, \bibinfo{person}{Dennis~Julian
  Meyer}, {and} \bibinfo{person}{Christopher-Eyk Hrabia}.}
  \bibinfo{year}{2018}\natexlab{}.
\newblock \showarticletitle{{Developing a Distributed Drone Delivery System
  with a Hybrid Behavior Planning System}}. In \bibinfo{booktitle}{\emph{Joint
  German/Austrian Conference on Artificial Intelligence (K{\"u}nstliche
  Intelligenz)}}. Springer, \bibinfo{pages}{107--114}.
\newblock


\bibitem[\protect\citeauthoryear{Li, Tinka, Kiesel, Durham, Kumar, and
  Koenig}{Li et~al\mbox{.}}{2021}]%
        {li2021lifelong}
\bibfield{author}{\bibinfo{person}{Jiaoyang Li}, \bibinfo{person}{Andrew
  Tinka}, \bibinfo{person}{Scott Kiesel}, \bibinfo{person}{Joseph~W Durham},
  \bibinfo{person}{TK~Satish Kumar}, {and} \bibinfo{person}{Sven Koenig}.}
  \bibinfo{year}{2021}\natexlab{}.
\newblock \showarticletitle{{Lifelong Multi-Agent Path Finding in Large-Scale
  Warehouses}}. In \bibinfo{booktitle}{\emph{{Proceedings of the AAAI
  Conference on Artificial Intelligence}}}, Vol.~\bibinfo{volume}{35}.
  \bibinfo{pages}{11272--11281}.
\newblock


\bibitem[\protect\citeauthoryear{Li, Qin, Jiao, Yang, Wang, Wang, Wu, and
  Ye}{Li et~al\mbox{.}}{2019}]%
        {li2019efficient}
\bibfield{author}{\bibinfo{person}{Minne Li}, \bibinfo{person}{Zhiwei Qin},
  \bibinfo{person}{Yan Jiao}, \bibinfo{person}{Yaodong Yang},
  \bibinfo{person}{Jun Wang}, \bibinfo{person}{Chenxi Wang},
  \bibinfo{person}{Guobin Wu}, {and} \bibinfo{person}{Jieping Ye}.}
  \bibinfo{year}{2019}\natexlab{}.
\newblock \showarticletitle{{Efficient Ridesharing Order Dispatching with Mean
  Field Multi-Agent Reinforcement Learning}}. In \bibinfo{booktitle}{\emph{{The
  World Wide Web Conference}}}. ACM, \bibinfo{pages}{983--994}.
\newblock
\urldef\tempurl%
\url{https://doi.org/10.1145/3308558.3313433}
\showDOI{\tempurl}


\bibitem[\protect\citeauthoryear{Liu, Ma, Li, and Koenig}{Liu
  et~al\mbox{.}}{2019}]%
        {liu2019task}
\bibfield{author}{\bibinfo{person}{Minghua Liu}, \bibinfo{person}{Hang Ma},
  \bibinfo{person}{Jiaoyang Li}, {and} \bibinfo{person}{Sven Koenig}.}
  \bibinfo{year}{2019}\natexlab{}.
\newblock \showarticletitle{{Task and Path Planning for Multi-Agent Pickup and
  Delivery}}. In \bibinfo{booktitle}{\emph{{Proceedings of the 18th
  International Conference on Autonomous Agents and MultiAgent Systems}}}.
  International Foundation for Autonomous Agents and Multiagent Systems,
  \bibinfo{pages}{1152--1160}.
\newblock


\bibitem[\protect\citeauthoryear{Ma, H{\"o}nig, Kumar, Ayanian, and Koenig}{Ma
  et~al\mbox{.}}{2019}]%
        {ma2019lifelong}
\bibfield{author}{\bibinfo{person}{Hang Ma}, \bibinfo{person}{Wolfgang
  H{\"o}nig}, \bibinfo{person}{TK~Satish Kumar}, \bibinfo{person}{Nora
  Ayanian}, {and} \bibinfo{person}{Sven Koenig}.}
  \bibinfo{year}{2019}\natexlab{}.
\newblock \showarticletitle{{Lifelong Path Planning with Kinematic Constraints
  for Multi-Agent Pickup and Delivery}}. In
  \bibinfo{booktitle}{\emph{{Proceedings of the AAAI Conference on Artificial
  Intelligence}}}, Vol.~\bibinfo{volume}{33}. \bibinfo{pages}{7651--7658}.
\newblock
\urldef\tempurl%
\url{https://doi.org/10.1609/aaai.v33i01.33017651}
\showDOI{\tempurl}


\bibitem[\protect\citeauthoryear{Ma, Koenig, Ayanian, Cohen, H{\"o}nig, Kumar,
  Uras, Xu, Tovey, and Sharon}{Ma et~al\mbox{.}}{2017a}]%
        {ma2017overview}
\bibfield{author}{\bibinfo{person}{Hang Ma}, \bibinfo{person}{Sven Koenig},
  \bibinfo{person}{Nora Ayanian}, \bibinfo{person}{Liron Cohen},
  \bibinfo{person}{Wolfgang H{\"o}nig}, \bibinfo{person}{TK Kumar},
  \bibinfo{person}{Tansel Uras}, \bibinfo{person}{Hong Xu},
  \bibinfo{person}{Craig Tovey}, {and} \bibinfo{person}{Guni Sharon}.}
  \bibinfo{year}{2017}\natexlab{a}.
\newblock \showarticletitle{{Overview: Generalizations of multi-agent path
  finding to real-world scenarios}}.
\newblock \bibinfo{journal}{\emph{arXiv preprint arXiv:1702.05515}}
  (\bibinfo{year}{2017}).
\newblock


\bibitem[\protect\citeauthoryear{Ma, Li, Kumar, and Koenig}{Ma
  et~al\mbox{.}}{2017b}]%
        {ma2017lifelong}
\bibfield{author}{\bibinfo{person}{Hang Ma}, \bibinfo{person}{Jiaoyang Li},
  \bibinfo{person}{TK Kumar}, {and} \bibinfo{person}{Sven Koenig}.}
  \bibinfo{year}{2017}\natexlab{b}.
\newblock \showarticletitle{{Lifelong multi-agent path finding for online
  pickup and delivery tasks}}. In \bibinfo{booktitle}{\emph{{Proceedings of the
  16th Conference on Autonomous Agents and MultiAgent Systems}}}. International
  Foundation for Autonomous Agents and Multiagent Systems,
  \bibinfo{pages}{837--845}.
\newblock


\bibitem[\protect\citeauthoryear{Ma, Tovey, Sharon, Kumar, and Koenig}{Ma
  et~al\mbox{.}}{2016}]%
        {ma2016multi}
\bibfield{author}{\bibinfo{person}{Hang Ma}, \bibinfo{person}{Craig Tovey},
  \bibinfo{person}{Guni Sharon}, \bibinfo{person}{TK~Satish Kumar}, {and}
  \bibinfo{person}{Sven Koenig}.} \bibinfo{year}{2016}\natexlab{}.
\newblock \showarticletitle{{Multi-agent path finding with payload transfers
  and the package-exchange robot-routing problem}}. In
  \bibinfo{booktitle}{\emph{{Thirtieth AAAI Conference on Artificial
  Intelligence}}}.
\newblock


\bibitem[\protect\citeauthoryear{Morris, Pasareanu, Luckow, Malik, Ma, Kumar,
  and Koenig}{Morris et~al\mbox{.}}{2016}]%
        {morris2016planning}
\bibfield{author}{\bibinfo{person}{Robert Morris}, \bibinfo{person}{Corina~S
  Pasareanu}, \bibinfo{person}{Kasper Luckow}, \bibinfo{person}{Waqar Malik},
  \bibinfo{person}{Hang Ma}, \bibinfo{person}{TK~Satish Kumar}, {and}
  \bibinfo{person}{Sven Koenig}.} \bibinfo{year}{2016}\natexlab{}.
\newblock \showarticletitle{{Planning, scheduling and monitoring for airport
  surface operations}}. In \bibinfo{booktitle}{\emph{{Workshops at the
  Thirtieth AAAI Conference on Artificial Intelligence}}}.
\newblock


\bibitem[\protect\citeauthoryear{Nie, Zeng, and Yu}{Nie et~al\mbox{.}}{2020}]%
        {nie2020effective}
\bibfield{author}{\bibinfo{person}{Zhenbang Nie}, \bibinfo{person}{Peng Zeng},
  {and} \bibinfo{person}{Haibin Yu}.} \bibinfo{year}{2020}\natexlab{}.
\newblock \showarticletitle{{Effective Decoupled Planning for Continuous
  Multi-Agent Pickup and Delivery}}. In \bibinfo{booktitle}{\emph{{2020 Chinese
  Control And Decision Conference (CCDC)}}}. IEEE, \bibinfo{pages}{2667--2672}.
\newblock


\bibitem[\protect\citeauthoryear{Nuutila and Soisalon-Soininen}{Nuutila and
  Soisalon-Soininen}{1994}]%
        {NUUTILA19949}
\bibfield{author}{\bibinfo{person}{Esko Nuutila} {and} \bibinfo{person}{Eljas
  Soisalon-Soininen}.} \bibinfo{year}{1994}\natexlab{}.
\newblock \showarticletitle{{On finding the strongly connected components in a
  directed graph}}.
\newblock \bibinfo{journal}{\emph{Inform. Process. Lett.}}
  \bibinfo{volume}{49}, \bibinfo{number}{1} (\bibinfo{year}{1994}),
  \bibinfo{pages}{9--14}.
\newblock
\showISSN{0020-0190}
\urldef\tempurl%
\url{https://doi.org/10.1016/0020-0190(94)90047-7}
\showDOI{\tempurl}


\bibitem[\protect\citeauthoryear{Okumura, Machida, D{\'e}fago, and
  Tamura}{Okumura et~al\mbox{.}}{2019a}]%
        {okumura2019priority}
\bibfield{author}{\bibinfo{person}{Keisuke Okumura}, \bibinfo{person}{Manao
  Machida}, \bibinfo{person}{Xavier D{\'e}fago}, {and}
  \bibinfo{person}{Yasumasa Tamura}.} \bibinfo{year}{2019}\natexlab{a}.
\newblock \showarticletitle{{Priority Inheritance with Backtracking for
  Iterative Multi-agent Path Finding}}. In
  \bibinfo{booktitle}{\emph{Proceedings of the Twenty-Eighth International
  Joint Conference on Artificial Intelligence, {IJCAI-19}}}.
  \bibinfo{publisher}{International Joint Conferences on Artificial
  Intelligence Organization}, \bibinfo{pages}{535--542}.
\newblock
\urldef\tempurl%
\url{https://doi.org/10.24963/ijcai.2019/76}
\showDOI{\tempurl}


\bibitem[\protect\citeauthoryear{Okumura, Tamura, and D{\'e}fago}{Okumura
  et~al\mbox{.}}{2019b}]%
        {okumura2019winpibt}
\bibfield{author}{\bibinfo{person}{Keisuke Okumura}, \bibinfo{person}{Yasumasa
  Tamura}, {and} \bibinfo{person}{Xavier D{\'e}fago}.}
  \bibinfo{year}{2019}\natexlab{b}.
\newblock \showarticletitle{{winPIBT: Extended Prioritized Algorithm for
  Iterative Multi-agent Path Finding}}.
\newblock \bibinfo{journal}{\emph{arXiv preprint arXiv:1905.10149}}
  (\bibinfo{year}{2019}).
\newblock


\bibitem[\protect\citeauthoryear{Okumura, Tamura, and D{\'e}fago}{Okumura
  et~al\mbox{.}}{2021}]%
        {okumura2021time}
\bibfield{author}{\bibinfo{person}{Keisuke Okumura}, \bibinfo{person}{Yasumasa
  Tamura}, {and} \bibinfo{person}{Xavier D{\'e}fago}.}
  \bibinfo{year}{2021}\natexlab{}.
\newblock \showarticletitle{{Time-Independent Planning for Multiple Moving
  Agents}}. In \bibinfo{booktitle}{\emph{{Proceedings of the AAAI Conference on
  Artificial Intelligence}}}, Vol.~\bibinfo{volume}{35}.
  \bibinfo{pages}{11299--11307}.
\newblock


\bibitem[\protect\citeauthoryear{Ren, Rathinam, and Choset}{Ren
  et~al\mbox{.}}{2021a}]%
        {ren2021ms}
\bibfield{author}{\bibinfo{person}{Zhongqiang Ren}, \bibinfo{person}{Sivakumar
  Rathinam}, {and} \bibinfo{person}{Howie Choset}.}
  \bibinfo{year}{2021}\natexlab{a}.
\newblock \showarticletitle{{MS*: A New Exact Algorithm for Multi-agent
  Simultaneous Multi-goal Sequencing and Path Finding}}.
\newblock \bibinfo{journal}{\emph{arXiv preprint arXiv:2103.09979}}
  (\bibinfo{year}{2021}).
\newblock


\bibitem[\protect\citeauthoryear{Ren, Rathinam, and Choset}{Ren
  et~al\mbox{.}}{2021b}]%
        {ren2021multi}
\bibfield{author}{\bibinfo{person}{Zhongqiang Ren}, \bibinfo{person}{Sivakumar
  Rathinam}, {and} \bibinfo{person}{Howie Choset}.}
  \bibinfo{year}{2021}\natexlab{b}.
\newblock \showarticletitle{{Multi-objective Conflict-based Search Using
  Safe-interval Path Planning}}.
\newblock \bibinfo{journal}{\emph{arXiv preprint arXiv:2108.00745}}
  (\bibinfo{year}{2021}).
\newblock


\bibitem[\protect\citeauthoryear{Robinson, Mar, Estabridis, and Hewer}{Robinson
  et~al\mbox{.}}{2018}]%
        {robinson2018efficient}
\bibfield{author}{\bibinfo{person}{D~Reed Robinson}, \bibinfo{person}{Robert~T
  Mar}, \bibinfo{person}{Katia Estabridis}, {and} \bibinfo{person}{Gary
  Hewer}.} \bibinfo{year}{2018}\natexlab{}.
\newblock \showarticletitle{{An efficient algorithm for optimal trajectory
  generation for heterogeneous multi-agent systems in non-convex
  environments}}.
\newblock \bibinfo{journal}{\emph{IEEE Robotics and Automation Letters}}
  \bibinfo{volume}{3}, \bibinfo{number}{2} (\bibinfo{year}{2018}),
  \bibinfo{pages}{1215--1222}.
\newblock


\bibitem[\protect\citeauthoryear{Salzman and Stern}{Salzman and Stern}{2020}]%
        {salzman2020research}
\bibfield{author}{\bibinfo{person}{Oren Salzman} {and} \bibinfo{person}{Roni
  Stern}.} \bibinfo{year}{2020}\natexlab{}.
\newblock \showarticletitle{{Research Challenges and Opportunities in
  Multi-Agent Path Finding and Multi-Agent Pickup and Delivery Problems}}. In
  \bibinfo{booktitle}{\emph{{Proceedings of the 19th International Conference
  on Autonomous Agents and MultiAgent Systems}}}. \bibinfo{pages}{1711--1715}.
\newblock


\bibitem[\protect\citeauthoryear{Sharon, Stern, Felner, and Sturtevant}{Sharon
  et~al\mbox{.}}{2015}]%
        {sharon2015conflict}
\bibfield{author}{\bibinfo{person}{Guni Sharon}, \bibinfo{person}{Roni Stern},
  \bibinfo{person}{Ariel Felner}, {and} \bibinfo{person}{Nathan~R.
  Sturtevant}.} \bibinfo{year}{2015}\natexlab{}.
\newblock \showarticletitle{{Conflict-based search for optimal multi-agent
  pathfinding}}.
\newblock \bibinfo{journal}{\emph{Artificial Intelligence}}
  \bibinfo{volume}{219} (\bibinfo{year}{2015}), \bibinfo{pages}{40--66}.
\newblock
\showISSN{0004-3702}
\urldef\tempurl%
\url{https://doi.org/10.1016/j.artint.2014.11.006}
\showDOI{\tempurl}


\bibitem[\protect\citeauthoryear{Tahir, Syed, and Baroudi}{Tahir
  et~al\mbox{.}}{2019}]%
        {tahir2019heuristic}
\bibfield{author}{\bibinfo{person}{Haseeb Tahir}, \bibinfo{person}{Mujahid~N
  Syed}, {and} \bibinfo{person}{Uthman Baroudi}.}
  \bibinfo{year}{2019}\natexlab{}.
\newblock \showarticletitle{{Heuristic approach for real-time multi-agent
  trajectory planning under uncertainty}}.
\newblock \bibinfo{journal}{\emph{IEEE Access}}  \bibinfo{volume}{8}
  (\bibinfo{year}{2019}), \bibinfo{pages}{3812--3826}.
\newblock


\bibitem[\protect\citeauthoryear{Tarjan}{Tarjan}{1972}]%
        {tarjan1972depth}
\bibfield{author}{\bibinfo{person}{Robert Tarjan}.}
  \bibinfo{year}{1972}\natexlab{}.
\newblock \showarticletitle{{Depth-first search and linear graph algorithms}}.
\newblock \bibinfo{journal}{\emph{SIAM journal on computing}}
  \bibinfo{volume}{1}, \bibinfo{number}{2} (\bibinfo{year}{1972}),
  \bibinfo{pages}{146--160}.
\newblock
\urldef\tempurl%
\url{https://doi.org/10.1137/0201010}
\showDOI{\tempurl}


\bibitem[\protect\citeauthoryear{Veloso, Biswas, Coltin, and Rosenthal}{Veloso
  et~al\mbox{.}}{2015}]%
        {veloso2015cobots}
\bibfield{author}{\bibinfo{person}{Manuela Veloso}, \bibinfo{person}{Joydeep
  Biswas}, \bibinfo{person}{Brian Coltin}, {and} \bibinfo{person}{Stephanie
  Rosenthal}.} \bibinfo{year}{2015}\natexlab{}.
\newblock \showarticletitle{{CoBots: Robust Symbiotic Autonomous Mobile Service
  Robots}}. In \bibinfo{booktitle}{\emph{{Proceedings of the 24th International
  Conference on Artificial Intelligence}}} (Buenos Aires, Argentina)
  \emph{(\bibinfo{series}{IJCAI'15})}. \bibinfo{publisher}{AAAI Press},
  \bibinfo{pages}{4423--4429}.
\newblock
\showISBNx{9781577357384}


\bibitem[\protect\citeauthoryear{Wang and Rubenstein}{Wang and
  Rubenstein}{2020}]%
        {wang2020walk}
\bibfield{author}{\bibinfo{person}{Hanlin Wang} {and} \bibinfo{person}{Michael
  Rubenstein}.} \bibinfo{year}{2020}\natexlab{}.
\newblock \showarticletitle{{Walk, Stop, Count, and Swap: Decentralized
  Multi-Agent Path Finding With Theoretical Guarantees}}.
\newblock \bibinfo{journal}{\emph{IEEE Robotics and Automation Letters}}
  \bibinfo{volume}{5}, \bibinfo{number}{2} (\bibinfo{year}{2020}),
  \bibinfo{pages}{1119--1126}.
\newblock
\urldef\tempurl%
\url{https://doi.org/10.1109/LRA.2020.2967317}
\showDOI{\tempurl}


\bibitem[\protect\citeauthoryear{Wang and Botea}{Wang and Botea}{2011}]%
        {wang2011mapp}
\bibfield{author}{\bibinfo{person}{Ko-Hsin~Cindy Wang} {and}
  \bibinfo{person}{Adi Botea}.} \bibinfo{year}{2011}\natexlab{}.
\newblock \showarticletitle{{MAPP: a scalable multi-agent path planning
  algorithm with tractability and completeness guarantees}}.
\newblock \bibinfo{journal}{\emph{Journal of Artificial Intelligence Research}}
   \bibinfo{volume}{42} (\bibinfo{year}{2011}), \bibinfo{pages}{55--90}.
\newblock


\bibitem[\protect\citeauthoryear{Wurman, D'Andrea, and Mountz}{Wurman
  et~al\mbox{.}}{2008}]%
        {wurman2008coordinating}
\bibfield{author}{\bibinfo{person}{Peter~R Wurman}, \bibinfo{person}{Raffaello
  D'Andrea}, {and} \bibinfo{person}{Mick Mountz}.}
  \bibinfo{year}{2008}\natexlab{}.
\newblock \showarticletitle{{Coordinating hundreds of cooperative, autonomous
  vehicles in warehouses}}.
\newblock \bibinfo{journal}{\emph{AI magazine}} \bibinfo{volume}{29},
  \bibinfo{number}{1} (\bibinfo{year}{2008}), \bibinfo{pages}{9--9}.
\newblock
\urldef\tempurl%
\url{https://doi.org/10.1609/aimag.v29i1.2082}
\showDOI{\tempurl}


\bibitem[\protect\citeauthoryear{Yamauchi, Miyashita, and Sugawara}{Yamauchi
  et~al\mbox{.}}{2021}]%
        {yamauchi2021path}
\bibfield{author}{\bibinfo{person}{Tomoki Yamauchi}, \bibinfo{person}{Yuki
  Miyashita}, {and} \bibinfo{person}{Toshiharu Sugawara}.}
  \bibinfo{year}{2021}\natexlab{}.
\newblock \showarticletitle{{Path and Action Planning in Non-uniform
  Environments for Multi-agent Pickup and Delivery Tasks}}. In
  \bibinfo{booktitle}{\emph{{European Conference on Multi-Agent Systems}}}.
  Springer, \bibinfo{pages}{37--54}.
\newblock
\urldef\tempurl%
\url{https://doi.org/10.1007/978-3-030-82254-5\_3}
\showDOI{\tempurl}


\bibitem[\protect\citeauthoryear{Yoshida, Noda, and Sugawara}{Yoshida
  et~al\mbox{.}}{2020}]%
        {yoshida2020multi}
\bibfield{author}{\bibinfo{person}{Naoki Yoshida}, \bibinfo{person}{Itsuki
  Noda}, {and} \bibinfo{person}{Toshiharu Sugawara}.}
  \bibinfo{year}{2020}\natexlab{}.
\newblock \showarticletitle{{Multi-agent Service Area Adaptation for
  Ride-Sharing Using Deep Reinforcement Learning}}. In
  \bibinfo{booktitle}{\emph{{International Conference on Practical Applications
  of Agents and Multi-Agent Systems}}}. Springer, \bibinfo{pages}{363--375}.
\newblock
\urldef\tempurl%
\url{https://doi.org/10.1007/978-3-030-49778-1_29}
\showDOI{\tempurl}


\bibitem[\protect\citeauthoryear{Zhang, Li, Surynek, Koenig, and Kumar}{Zhang
  et~al\mbox{.}}{2020}]%
        {zhang2020multi}
\bibfield{author}{\bibinfo{person}{Han Zhang}, \bibinfo{person}{Jiaoyang Li},
  \bibinfo{person}{Pavel Surynek}, \bibinfo{person}{Sven Koenig}, {and}
  \bibinfo{person}{TK~Satish Kumar}.} \bibinfo{year}{2020}\natexlab{}.
\newblock \showarticletitle{{Multi-Agent Path Finding with Mutex Propagation}}.
  In \bibinfo{booktitle}{\emph{{Proceedings of the International Conference on
  Automated Planning and Scheduling}}}, Vol.~\bibinfo{volume}{30}.
  \bibinfo{pages}{323--332}.
\newblock


\end{thebibliography}


\clearpage

\begin{figure}
  \centering
  \begin{minipage}[b]{0.68\hsize}
    \centering
    \includegraphics[width=0.95\hsize]{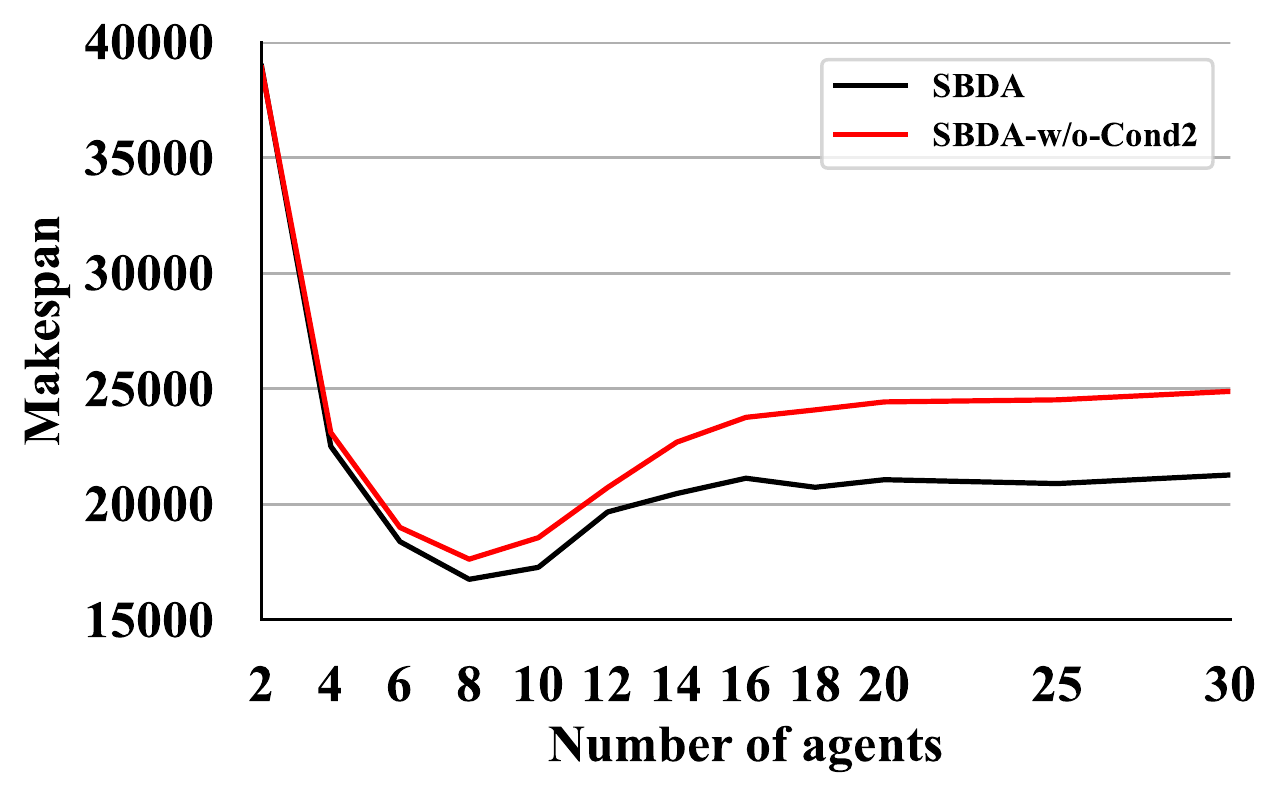}
    \subcaption{Env.~1}\label{subfig:env1-makespan}
  \end{minipage}
  \hfil
  \begin{minipage}[b]{0.68\hsize}
    \centering
    \includegraphics[width=0.95\hsize]{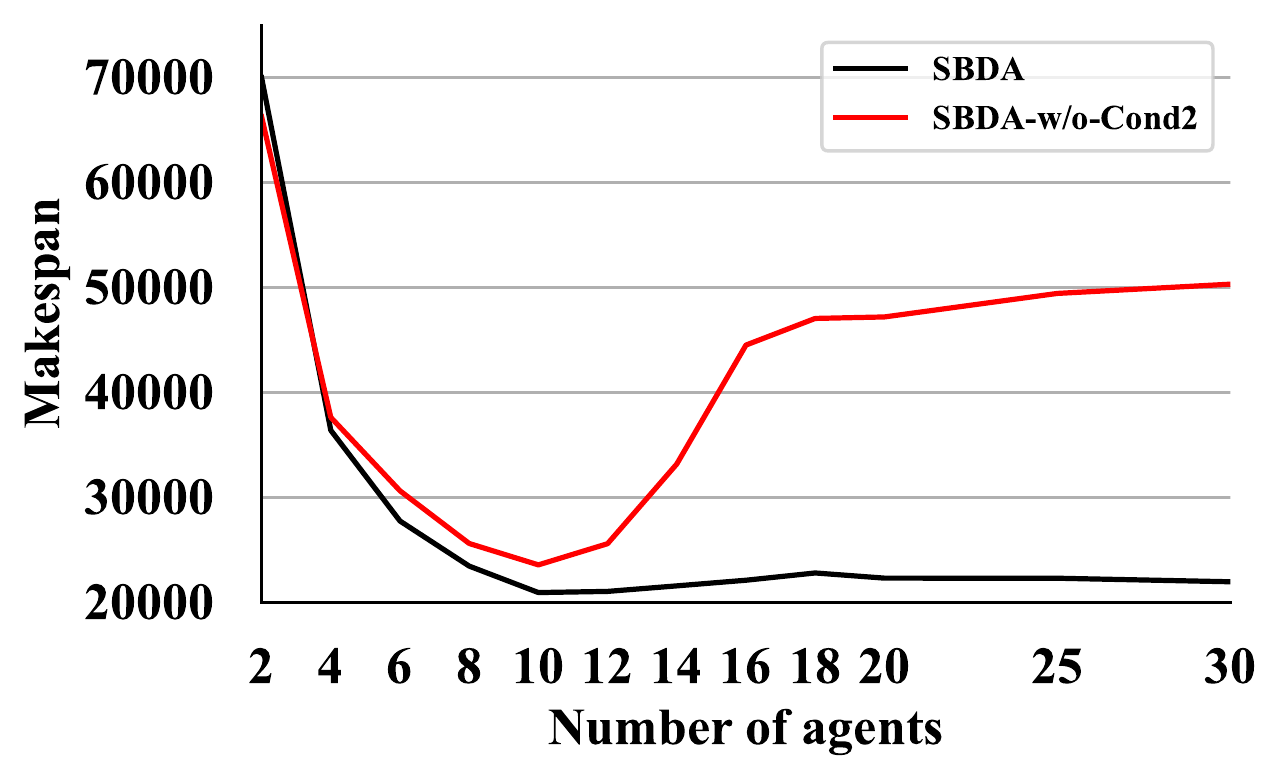}
    \subcaption{Env.~2}\label{subfig:env2-makespan}
  \end{minipage}
  \caption{Ablation study with Cond.~2 (Makespan)}
  \label{fig:makespan}
\end{figure}

\begin{figure}
  \centering
  \begin{minipage}[b]{0.68\hsize}
    \centering
    \includegraphics[width=0.95\hsize]{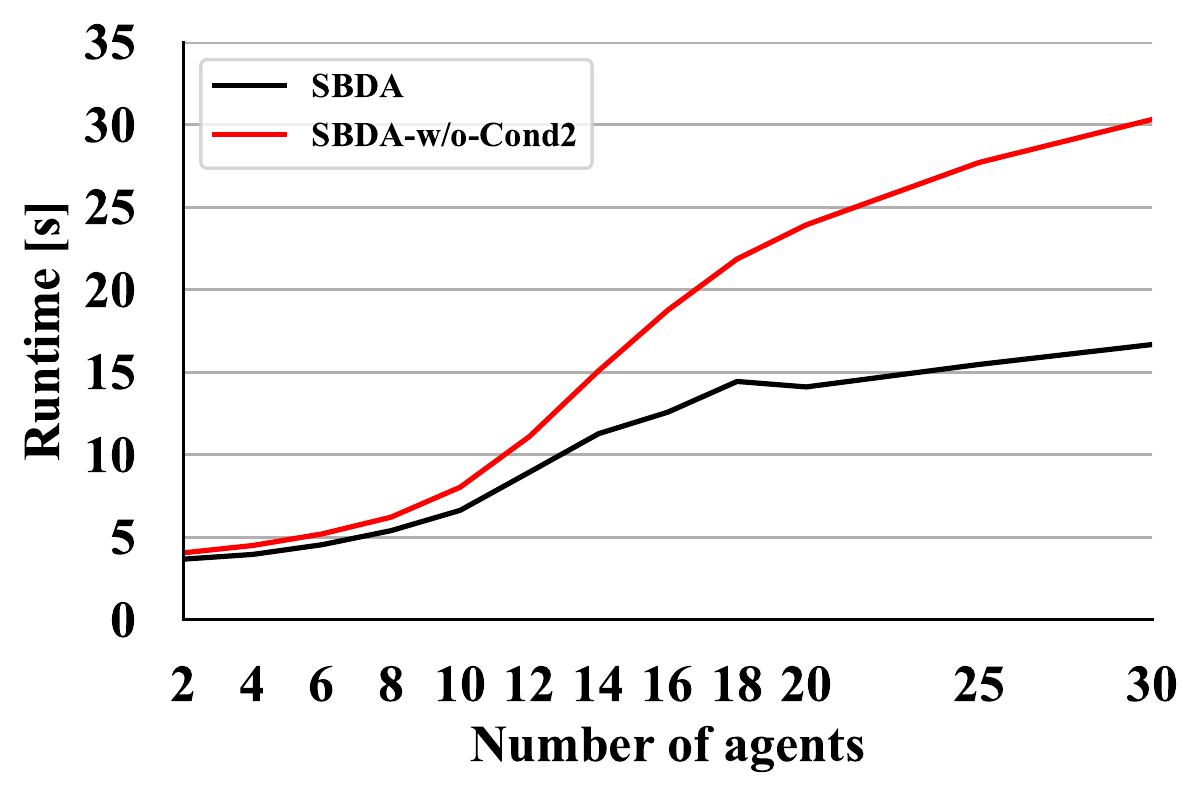}
    \subcaption{Env.~1}\label{subfig:env1-runtime}
  \end{minipage}
  \hfil
  \begin{minipage}[b]{0.68\hsize}
    \centering
    \includegraphics[width=0.95\hsize]{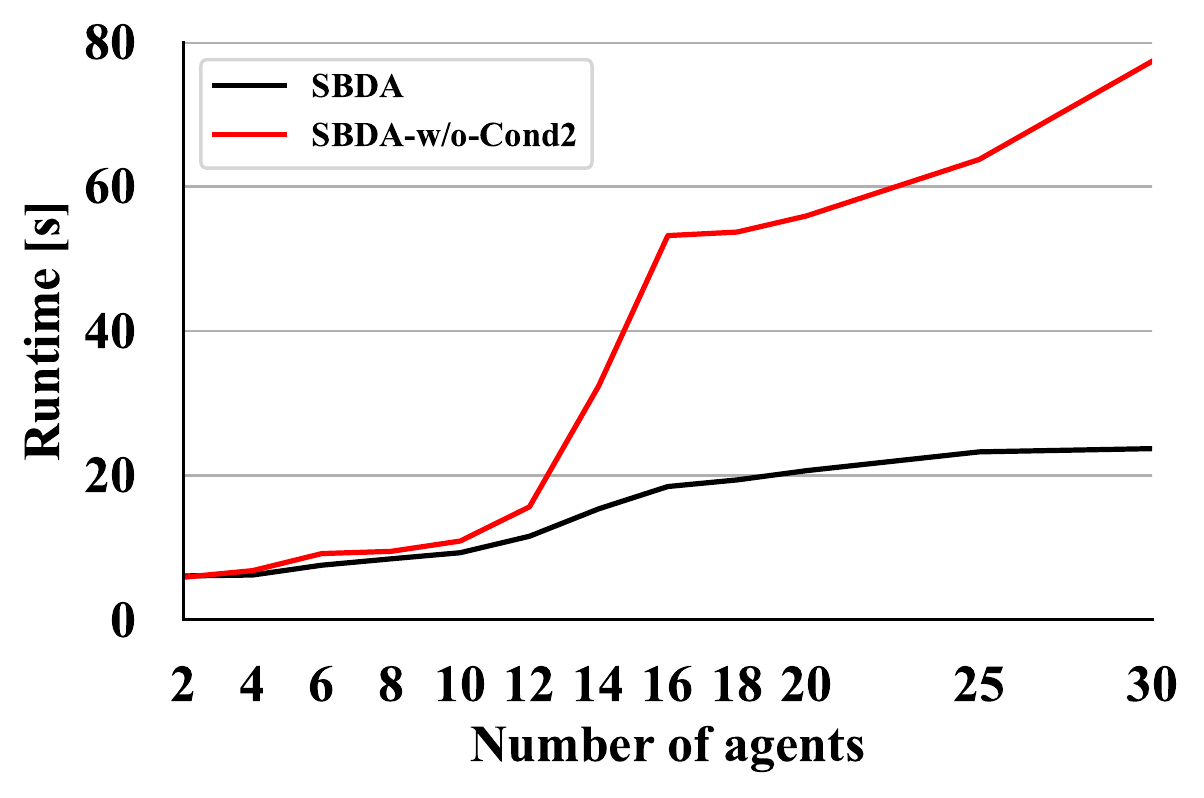}
    \subcaption{Env.~2}\label{subfig:env2-runtime}
  \end{minipage}
  \caption{Ablation study with Cond.~2 (Runtime)}
  \label{fig:runtime}
\end{figure}

\begin{figure}
  \centering
  \begin{minipage}[b]{0.68\hsize}
    \centering
    \includegraphics[width=0.95\hsize]{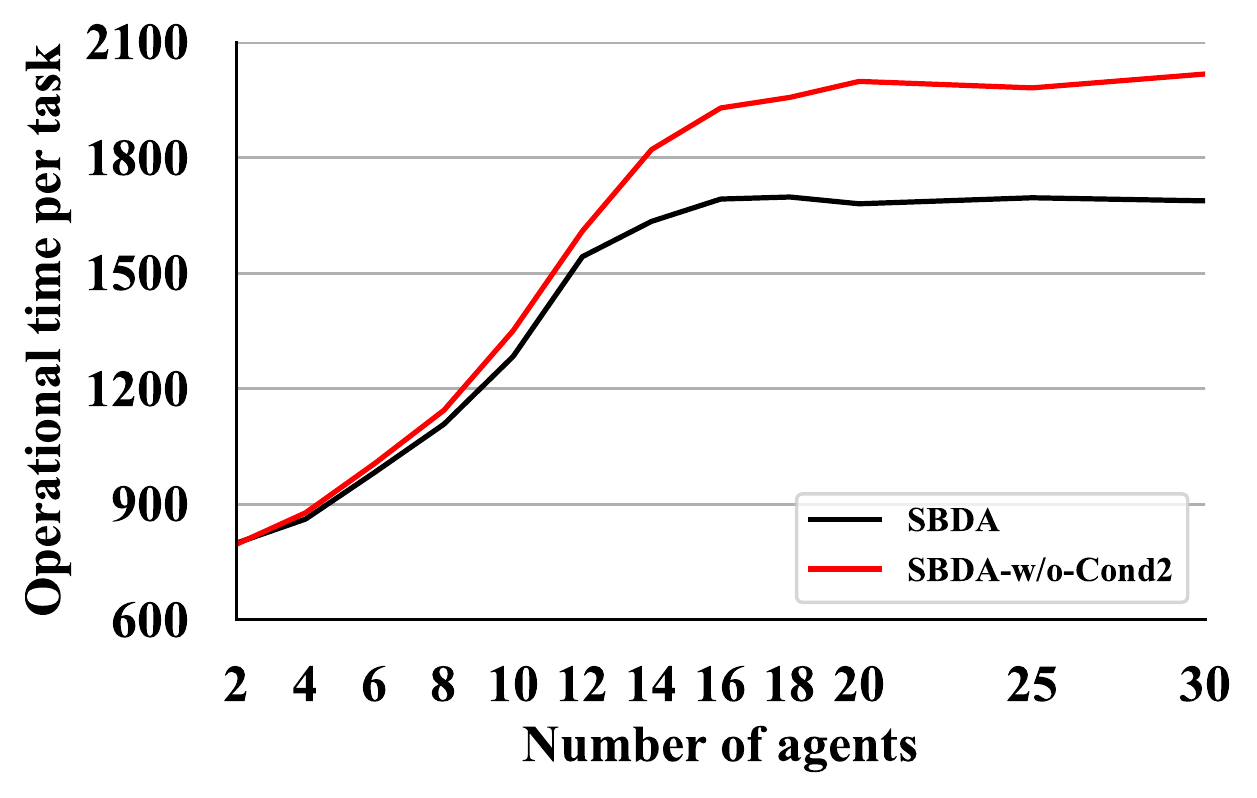}
    \subcaption{Env.~1}\label{subfig:env1-operational-time}
  \end{minipage}
  \hfil
  \begin{minipage}[b]{0.68\hsize}
    \centering
    \includegraphics[width=0.95\hsize]{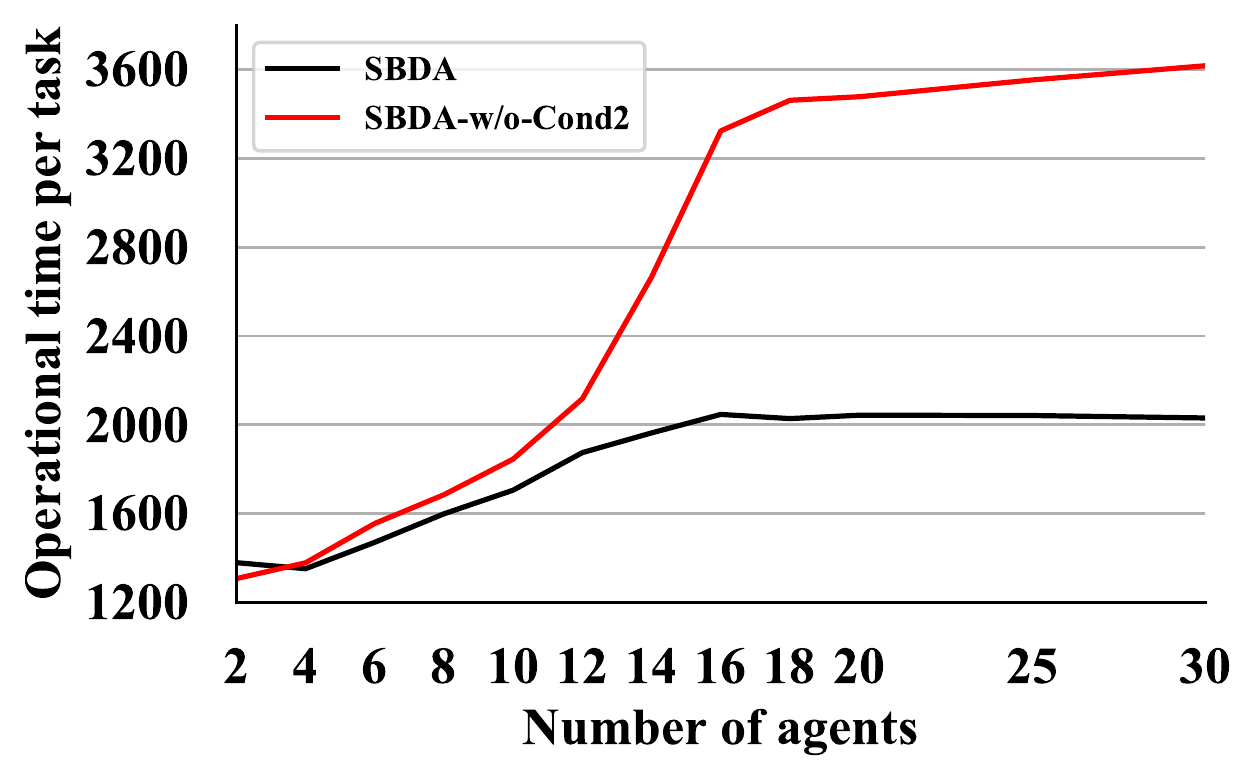}
    \subcaption{Env.~2}\label{subfig:env2-operational-time}
  \end{minipage}
  \caption{Ablation study with Cond.~2 (Operational time)}
  \label{fig:operational-time}
\end{figure}

\section*{Appendix}
\subsection*{Additional Ablation Study}
We also conducted experiments to verify the effect of Cond.~2 for the destination decision in the ablation study as Exp.~2. Figs.~\ref{fig:makespan}, \ref{fig:runtime} and \ref{fig:operational-time} are plotted the results of SBDA and ablation method of SBDA without Cond.~2 (SBDA-w/o-Cond2) in Envs.~1 and 2.
\par

These figures indicate that the makespan, the runtime, and the operational time of SBDA-w/o-Cond2 significantly increased as $M$ increased after $M = 12$ in both Envs.~1 and 2, especially in Env.~2. This indicates that Cond.~2 greatly contributed to preventing agents from ignoring agents waiting in standby nodes and cutting into the queue to endpoints. In fact, if Cond.~2 is removed, depending on whether the endpoint is open when the agent is in a distant standby node, task endpoint, or parking node, it may hold the endpoint, and other agents on standby nodes near the endpoint will be forced to wait for an additional long time, especially in Env.~2 because it is wider than Env.~1 and has only two load nodes. Such long waiting cascades other delays in movements of other agents, and makespan worsened significantly.

\end{document}